\newtheorem{thm}{Theorem}[section]
\newtheorem{lem}{Lemma}[section]
\newcommand{\be}{\begin{equation}}
\newcommand{\ee}{\end{equation}}
\newcommand{\bea}{\begin{eqnarray}}
\newcommand{\eea}{\end{eqnarray}}
\newcommand{\6}{\partial}
\newcommand{\inti}{\int_{-\infty}^{+\infty}}
\newcommand{\lam}{\lambda}
\newcommand{\la}{v}
\newcommand{\m}{w}
\newcommand{\g}{\gamma}
\newcommand{\au}{\mathfrak{a}_1}
\newcommand{\ad}{\mathfrak{a}_2}
\newcommand{\pu}{\varphi_1}
\newcommand{\pd}{\varphi_2}
\newcommand{\pt}{\varphi_3}
\newcommand{\R}{\textsf{R}}
\newcommand{\Lo}{\textsf{L}}
\newcommand{\T}{\textsf{T}}
\newcommand{\Pe}{\textsf{P}}
\newcommand{\tr}{\textsf{t}}
\newcommand{\lu}{v^{(1)}}
\newcommand{\ld}{v^{(2)}}
\begin{document}

\title{Thermodynamics, density profiles and correlation functions of the inhomogeneous one-dimensional spinor Bose gas}
\author{Ovidiu I. P\^a\c tu  }
\affiliation{Institute for Space Sciences, Bucharest-M\u{a}gurele, R 077125, Romania}
\author{Andreas Kl\"umper}
\affiliation{Fachbereich C – Physik, Bergische Universit\"at  Wuppertal, 42097 Wuppertal, Germany}

\pacs{67.85.-d, 02.30.Ik}

\begin{abstract}

We investigate the finite temperature properties of the one-dimensional
two-component Bose gas (2CBG) with repulsive contact interaction in a harmonic
trap. Making use of a new lattice embedding for the 2CBG and the quantum
transfer matrix we derive a system of two nonlinear integral equations
characterizing the thermodynamics of the uniform system for all values of the
relevant parameters: temperature, strength of the interaction, chemical
potential and magnetic field. This system allows for an easy numerical
implementation in stark contrast with the infinite number of equations
obtained by employing the thermodynamic Bethe ansatz. We use this exact
solution coupled with the local density approximation to compute the density
profiles and local density correlation function of the inhomogeneous gas for a
wide range of coupling strengths and temperatures. Our results show that the
polarization in the center of the trap influences heavily the local correlator
especially in the experimentally accessible Tonks-Girardeau regime.

\end{abstract}

\maketitle

\section{Introduction}

As a result of the growing expertise in the creation and the manipulation of
ultracold atomic gases we have witnessed in recent years the experimental
realization of various strongly interacting many-body systems characterized by
an extreme degree of purity and excellent control of relevant parameters
including temperature, strength of the interaction and even dimensionality
\cite{BDZ,CCGOR, GBL}. In particular, one-dimensional (1D) gases which can be
created using optical lattices or atom chips are extremely interesting for
several reasons. One reason is that these systems exhibit regimes such as the
Luttinger liquid (LL) \cite{Gia}, spin-incoherent LL \cite{CZ1,FB,Fiete} and
the ferromagnetic liquid \cite{AT,MF,KG,ZCG3}, which are not present in 2D and
3D. In addition, the high degree of control of the relevant parameters means
that some of these systems can be well approximated by integrable systems
providing a parameter-free comparison of theoretical predictions with
measurements. The paradigmatic example in this case is the experimental
realization \cite{tMSK,tKWW1,tP,tLOH,tPRD,tAetal} of the Lieb-Liniger model
\cite{LL} which has been investigated theoretically for more than fifty years.

It is natural to expect that one-dimensional physical systems comprised of
particles with internal degrees of freedom present a wider range of phenomena
than their single component counterparts. Such spinor gases have been produced
\cite{MLHNW,eSKBS, WMGKH,HSICL,SHLVS,A1etal,LRPLP} by trapping atoms in two or
more internal states which can be referred as (pseudo)spin states. In the case
of fermionic spinor gases the groundstate is antiferromagnetic \cite{LiMa} and
at low-energy they are described by the LL theory. In addition the phenomenon
of spin-charge separation is present which means that the Hilbert space
separates in two independent sectors one for the collective spin excitations
and one for the collective charge excitations. The situation involving bosonic
spinor gases is more complex. While there are cases in which the physics is
similar to their fermionic counterpart \cite{KKMGS}, in the case of
spin-independent interactions, which is integrable and will be the main
subject of this paper, the groundstate is ferromagnetic \cite{EL,KL}. The spin
excitations have a dispersion behaving like $k^2$ which makes the application
of the LL theory impossible. For these systems the low-energy sector is
described by a new universality class called the ferromagnetic liquid
\cite{AT,MF,KG,ZCG3}.

In this article we are going to investigate the thermodynamics, density
profiles and local density-density correlation function of the trapped 1D
two-component Bose gas interacting via a $\delta$-function potential using the
exact solution of the uniform system and the local density approximation
(LDA). From the historical point of view, the first exact solution of an
integrable multicomponent system was obtained by Yang \cite{Y1} and Gaudin
\cite{Ga1} for the two-component fermionic case using what we call nowadays
the nested Bethe ansatz. The spectrum of the 2CBG was derived in
\cite{B1,LGYE} and the low-lying excitations were investigated in
\cite{LGYE}. The spin-wave excitations, spin dynamics, edge exponent in the
dynamic spin structure factor, groundstate properties in a trap at $T=0$ and
impurity dynamics can be found in \cite{FGKS,BBGO,ZCG1,ZCG2,
  HC1,HC2,ZMS,HZGC,KMGZD,MZD,RCK}. Even though the study of nonlocal
correlation functions is still in its early stages \cite{IP,CSZ2} recent
progress in the calculation of form-factors \cite{POK,BPRS,PRS} opens the way
for the calculation of correlators along the lines of \cite{CCS,PC}. As a
result of the integrability of the model one might believe that computing the
thermodynamical properties should be an accessible task. Unfortunately, this
assertion is not true. For example, the application of the thermodynamical
Bethe ansatz (TBA) \cite{YY,T} produces an infinite number of nonlinear
integral equations (NLIEs) \cite{GLYZ} which makes the extraction of physical
information extremely difficult. Even though some important results can be
derived on the basis of the TBA equations \cite{GBT,CKB,KC}, it is highly
desirable to obtain an alternative thermodynamic description allowing for an
easy numerical implementation. Our result, Eqs.~(\ref{GCPgas}) and
(\ref{NLIEgas}), which was derived using the quantum transfer matrix (QTM) and
a new lattice embedding of the 2CBG, provide such an alternative description.

The basis of our method is the following observation. Consider an integrable
continuum model and one of its lattice embeddings. By lattice embedding of a
continuum model we understand an integrable model defined on a lattice whose
Bethe equations and spectrum transform in a specific scaling limit in the
Bethe equations and spectrum of the continuum model. Then, the thermodynamics
of the continuum model can be derived from the thermodynamics of the lattice
model if we perform the same scaling limit as in the case of the Bethe
equations and spectrum. One example is the Yang-Yang thermodynamics \cite{YY}
of the Lieb-Liniger model which was obtained from the thermodynamics of the
critical XXZ spin chain \cite{SBGA}.  As we will show in Sect. \ref{S2}, in
the case of the 2CBG the relevant lattice model is the critical $q=3$
Perk-Schultz spin chain \cite{PS1,Schul1,BVV,dV,dVL,L}. The thermodynamics of
the spin chain is derived using the quantum transfer matrix method
\cite{MS,Koma,SI1,K1,K2}, which can be defined only for lattice models but,
has the fundamental advantage of providing a finite number of integral
equations. Performing the scaling limit in these equations we obtain
Eqs.~(\ref{GCPgas}) and (\ref{NLIEgas}). This finite system of equations
coupled with the LDA allows for the calculation of the density profiles and
the local density-density correlation function of the system subjected to a
slowly varying harmonic potential.

The plan of the paper is as follows. In Section \ref{S1} we introduce the 2CBG
and the thermodynamic description of the uniform system. The results for the
density profiles and local correlator are presented in Section \ref{SD}.
Section \ref{S2} contains the lattice embedding of the 2CBG and in Section
\ref{S3} we remind some notions of the algebraic Bethe ansatz and introduce
the quantum transfer matrix. The derivation of the free energy of the spin
chain can be found in Sect.  \ref{S4} and the continuum limit is performed in
Section \ref{S5}. The nested Bethe ansatz solution of the generalized $q=3$
Perk-Schultz model and the proof of several useful identities can be found in
Appendices \ref{ap11} and \ref{ap22}. Some of the results presented
in this article were announced in the short paper \cite{KP1}.


\section{The one-dimensional two-component  Bose gas  }\label{S1}

We consider a one-dimensional system of equal mass bosons with two internal
degrees of freedom interacting via a $\delta$-function potential and
constrained on a ring of circumference $L_B$ (periodic boundary
conditions). The second quantized Hamiltonian in the presence of a harmonic
trapping potential $V(x)=m\omega x^2/2$ is
\begin{align}\label{Hc}
\mathcal{H}_{Bose}=\int_0^{L_B} dx \left[\frac{\hbar^2}{2m} (\6_x \Psi^\dagger \6_x \Psi)
+\frac{g}{2}\, :(\Psi^\dagger \Psi)^2:+(V(x)-\mu)(\Psi^\dagger \Psi)-H(\Psi^\dagger\sigma_z\Psi)\right]\, ,
\end{align}
with
$
\Psi=\left(\begin{array}{c} \Psi_1(x)\\ \Psi_2(x)\end{array}\right)\, ,
\Psi^\dagger=\left(\begin{array}{cc} \Psi_1^\dagger(x) & \Psi_2^\dagger(x)\end{array}\right)\, ,
\sigma_z=\left(\begin{array}{cc}1&0\\ 0&-1\end{array}\right)\, ,
$
where $\Psi_a(x),\, a=\{1,2\}$ are 1D quantum fields satisfying  canonical commutation
relations
$
\Psi_a(x)\Psi_b^\dagger(y)-\Psi_b^\dagger(y)\Psi_a(x)=\delta_{ab}\delta(x-y)\, .
$
In (\ref{Hc}) $g>0$ is the coupling constant, $\mu$ is the chemical potential,
$H$ the external magnetic field (the Bohr magneton and the Lande factors are
absorbed into $H$), $\omega$ is the trap oscillation frequency, $m$ is the
mass of the particles and $:\ \ :$ denotes normal ordering. In experiments,
the internal degrees of freedom are two distinguishable hyperfine states which
can be thought of as a (pseudo)spin $\frac{1}{2}$.  Compared with the scalar
case (Lieb-Liniger model \cite{LL}) the wavefunctions of the 2CBG are
symmetric only under exchange of coordinates of particles with the same
spin.
One dimensional systems that are well approximated by the
Hamiltonian (\ref{Hc}) can be experimentally achieved in highly elongated
cylindrical traps  $(\omega\ll \omega_\bot)$ where $\omega_\bot$ is the trap
oscillation frequency in  the transversal plane. Assuming that the 3D scattering
length $a_{3D}$ is much smaller than the  transverse harmonic oscillator length
$l_\bot=\sqrt{\hbar/m \omega_\bot}$ the  coupling strength can be expressed as
$g\simeq 2\hbar^2 a_{3D}/(m l_\bot^2)=2\hbar\omega_\bot a_{3D}$ \cite{Olsh,BMOl}.
The realization of the 1D regime requires that $\hbar\omega_\bot$ is larger than
the thermal energy $k_BT$ and the chemical potential $\mu$ \cite{KGDS2,BKSl}. The
experimental advances of the last decade paved the way for the experimental realization
of such quasi-1D systems \cite{BDZ,CCGOR,GBL} characterized by coupling strengths
which range from  weak  coupling $(\g\ll 1)$ to the strongly interacting Tonks-Girardeau
regime ($\g\gg 1$). In order to lighten the notation in the following we are going to
consider $\hbar=k_B=2m=1$ and introduce the effective interaction parameter
$c=g m/\hbar^2$.

\subsection{Bethe ansatz solution for the uniform system}

Even though the Hamiltonian (\ref{Hc}) is integrable only when the system is
homogeneous ($V(x)=0$), the trapped system can be efficiently investigated
using the solution of the uniform system coupled with the local density
approximation.  Therefore, we will first review the exact solution of the
homogeneous 2CBG and defer the treatment of the inhomogeneous system to
Section \ref{SD}. Since we are interested in investigating the thermodynamic
behavior we will consider only the case of repulsive interaction $c>0$ (for
$c<0$ the system is thermodynamically unstable). For a system of $M$ particles
of which $M_1^B$ are of type 1 and $M_2^B$ are of type 2 ($M=M_1^B+M_2^B$) the
energy spectrum of the 2CBG obtained using the nested Bethe ansatz
\cite{LGYE,B1} (see also \cite{Kul,KRe1,Slav}) is
\begin{align}\label{eb}
E_{Bose}=\sum_{j=1}^M \bar e_0(k^{(1)}_j)-H(M_1^B-M_2^B)\, ,\ \ \ \ \bar e_0(k)=k^2-\mu\, ,
\end{align}
with $\{k^{(1)}_j\}$ satisfying the Bethe ansatz equations (BAEs)
\begin{subequations}\label{BEc}
\begin{align}
&e^{ik^{(1)}_sL_B}=\prod_{\substack{j=1\\j\ne s}}^M\frac{k^{(1)}_s-k^{(1)}_j+ic}{k^{(1)}_s-k^{(1)}_j-ic}\
\prod_{p=1}^{M_1^B}\frac{k^{(1)}_s-k^{(2)}_p-ic/2}{k^{(1)}_s-k^{(2)}_p+ic/2}\, ,\ \  s=1,\cdots,M\, , \\
&\prod_{j=1}^M\frac{k^{(2)}_l-k^{(1)}_j+ic/2}{k^{(2)}_l-k^{(1)}_j-ic/2}=
\prod_{\substack{p=1\\p\ne l}}^{M_1^B}\frac{k^{(2)}_l-k^{(2)}_p+ic}{k^{(2)}_l-k^{(2)}_p-ic}\, , \ \ l=1,\cdots,M_1^B\, .
\end{align}
\end{subequations}
The solution is characterized by two sets of rapidities,
$\{k^{(1)}_j\},\{k^{(2)}_j\}$ (a general characteristic of integrable
two-component systems) with the second set of rapidities contributing to
the energy (\ref{eb}) only via the BAEs.

Even in the case of integrable models computing the thermodynamics is an
extremely challenging task. One way to tackle this problem is the utilization
of the thermodynamic Bethe ansatz \cite{YY,T}. In this framework, the
grandcanonical potential of the system is ($\beta=1/T$) \cite{GLYZ}
\begin{align}
\phi(\mu,H,\beta)=-\frac{1}{2\pi\beta}\inti dk\,  \ln(1+\eta_1(k))\, ,
\end{align}
with $\eta_1(k)$ satisfying the following infinite system of nonlinear
integral equations
\footnote{ This form of the TBA equations can be easily derived from the
one used in \cite{CKB,KC} by the use of simple transformations and identities
as in  Chap. 12 of \cite{T}).}
\begin{subequations}\label{TBAeq}
\begin{align}
&\ln \eta_1(k)=-\beta(k^2-\mu-H)+a_3\ast f\ast\ln(1+\eta_1(k))+f\ast\ln(1+\eta_2(k))\, ,\\
&\ln \eta_n(k)=f\ast\left(\ln(1+\eta_{n-1}(k))+\ln(1+\eta_{n+1}(k))\right)\, ,\ \ n=2,\cdots,\infty\, ,
\end{align}
\end{subequations}
together with the asymptotic condition $\lim_{n\rightarrow\infty} \ln
\eta_n(k)/n=2\beta H$. In Eqs.~(\ref{TBAeq}) $g*h(k)\equiv \inti
g(k-k')h(k')\ dk'$, $f(k)=1/[2c\cosh(\pi k/c)]$ and
$a_n(k)=nc/[2\pi((nc/2)^2+k^2)]$.  While important, it is obvious that
extracting physically relevant information from this system of equations is
very hard even from the numerical point of view (see \cite{CKB,KC,GBT}),
highlighting the need for a more manageable thermodynamic description of the
2CBG which will be presented in the next section.

\subsection{Alternative thermodynamic description of the uniform 2CBG}

A more efficient thermodynamic description of the 2CBG was proposed in
\cite{KP1}. The full derivation of this result which is based on the
connection of our model with the $(---)$ Perk-Schultz spin chain and the
quantum transfer matrix method represents one of the main results of this
paper and will be presented in Sections \ref{S4} and \ref{S5}. In this
description the grandcanonical potential per unit length of the 2CBG is
\be\label{GCPgas}
\phi(\mu,H,\beta)=-\frac{1}{2\pi\beta}\int_{\mathbb{R}}\left[\ln(1+a_1(k))+\ln(1+a_2(k))\right]\, dk\, ,
\ee
with $a_{1,2}(k)$ auxiliary functions satisfying the following system of integral equations
\begin{subequations}\label{NLIEgas}
\begin{align}
\ln a_1(k)&=-\beta(k^2-\mu-H)+\int_{\mathbb{R}} K_0^B(k-k')\ln(1+a_1(k'))\, dk'
+\int_{\mathbb{R}} K_2^B(k-k'-i\varepsilon)\ln(1+a_2(k'))\, dk'\, ,\\
\ln a_2(k)&=-\beta(k^2-\mu+H)+\int_{\mathbb{R}} K_1^B(k-k'+i\varepsilon)\ln(1+a_1(k'))\, dk'
+\int_{\mathbb{R}} K_0^B(k-k')\ln(1+a_2(k'))\, dk'\, ,
\end{align}
\end{subequations}
where $\varepsilon\rightarrow 0$ and
\be
 K_0^B(k)=\frac{1}{2\pi} \frac{2c}{k^2+c^2}\, ,\ \
 K_1^B(k)=\frac{1}{2\pi }\frac{c}{k(k+ic)}\, ,\ \
 K_2^B(k)=\frac{1}{2\pi }\frac{c}{k(k-ic)}\, .
\ee

We can check the validity of our result in three well known limits. First, we
will address the noninteracting limit, $c\rightarrow 0$.  Using
$\lim_{c\rightarrow 0} K_2^B(k-i\varepsilon)$ $=\lim_{c\rightarrow 0}
K_1^B(k+i\varepsilon)=0$ and $\lim_{c\rightarrow 0} K_0^B(k-k')=\delta(k-k')$
the NLIEs (\ref{NLIEgas}) decouple with the result
\begin{align*}
\ln a_1(k)&=-\beta(k^2-\mu-H)+\ln(1+a_1(k))\, ,\\
\ln a_2(k)&=-\beta(k^2-\mu+H)+\ln(1+a_2(k))\, .
\end{align*}
These equations are easily solved obtaining
\[
\phi(\mu,H,\beta)=\frac{1}{2\pi\beta}\int_{\mathbb{R}}
\left[\ln(1-e^{-\beta(k^2-\mu-H)})+\ln(1-e^{-\beta(k^2-\mu+H)})\right]dk\, ,
\]
which is exactly the grandcanonical potential of two noninteracting Bose gases
at different chemical potentials.

In the strong magnetic field limit, $H\rightarrow \infty$, due to the fact
that the inhomogeneity $-\beta(k^2-\mu+H)$ is large and negative $a_2(k)\sim
0$, which means that the thermodynamics of the system is given by
\[
\phi(\mu,H,\beta)=-\frac{1}{2\pi\beta}\int_{\mathbb{R}}\ln(1+a_1(k))\,dk\, ,\ \
\log a_1(k)=-\beta(k^2-\mu-H)+\int_{\mathbb{R}} K_0^B(k-k')\ln(1+a_1(k'))\, dk'\,  .
\]
This result, which is the Yang-Yang thermodynamics \cite{YY} of the
Lieb-Liniger gas, confirms the natural expectation that in the strong magnetic
limit the system will become fully polarized, and, therefore, the
thermodynamics will be the same as in the single component case. In a similar
fashion, we can show that if the magnetic field is fixed and finite, then, in
the low temperature limit $(T\ll\mu,H,c)$, the same formulas are obtained proving
that the ground state is ferromagnetic (fully polarized).

In the limit of impenetrable particles, $c\rightarrow \infty$, the system is
effectively ``fermionized" and we should reproduce the result obtained by
Takahashi (Chap. 12 of \cite{T}) for impenetrable repulsive two-component
fermions (2CFG)
\be\label{impfermions}
\phi_F(\mu,H,\beta)=-\frac{1}{2\pi\beta}\int_{\mathbb{R}}dk\,\ln\left(1+2\cosh(\beta H)e^{-\beta(k^2-\mu)}\right)\, .
\ee
Even though we did not succeed in providing an analytic proof of this
equivalence, we have checked numerically and found perfect agreement as can
be seen in Fig.~\ref{Impenetrable}.
\begin{figure}
\includegraphics[width=0.75\linewidth]{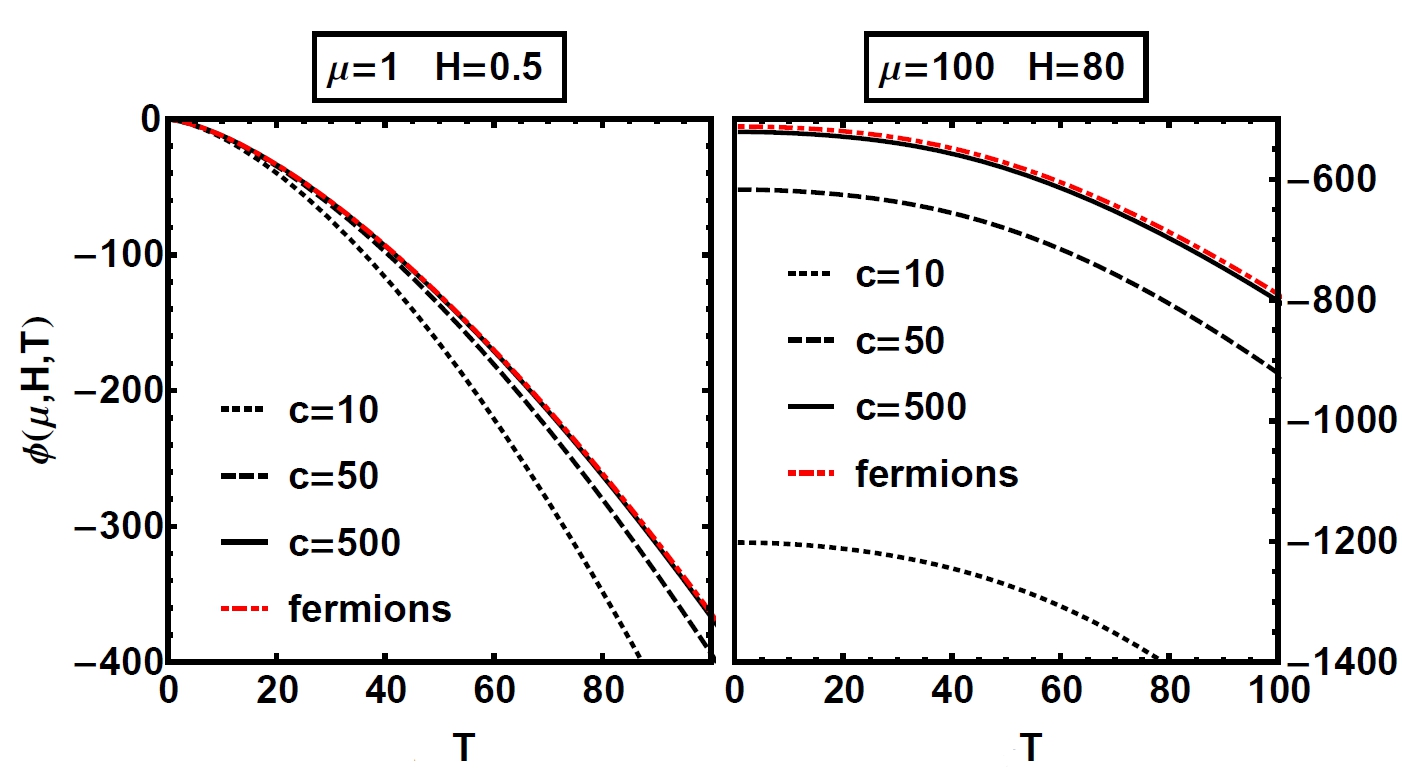}
\caption{(Color online) Plot of the grandcanonical potential per  length of the 2CBG and
  impenetrable 2CFG (dot-dashed line) as a function of temperature for
  $\mu=1\, , H=0.5$ (left panel) and $\mu=100\, , H=80$ (right panel). For large
  values of the coupling constant the 2CBG result approaches asymptotically
  Eq.~(\ref{impfermions}). Grandcanonical potential per length, temperature, chemical potential and magnetic
field in units of $\phi_0$, $T_0$, $\mu_0$ and $h_0$ \cite{units}.
}
\label{Impenetrable}
\end{figure}

\subsection{The ``heuristic'' derivation}

The rigorous and self-contained derivation of the thermodynamic description
presented in the previous section will be given in Sections \ref{S4} and
\ref{S5}. Here we provide a heuristic derivation based on the fact that the
system of equations (\ref{TBAeq}) share the same structure with the NLIEs for
the spin $\frac{1}{2}$ XXX spin chain. Let us be more precise. We consider a
very general $XXX$ spin chain case characterized by the Hamiltonian
$\mathcal{H}_{XXX}= \mathcal{H}_0+\sum_n\beta_n\mathcal{J}_n$ with
nearest-neighbor interaction $\mathcal{H}_0$ and higher conserved currents
$\mathcal{J}_n$. Explicitly we have
$\mathcal{H}_0:=J\sum_i\left(2\overrightarrow{S}_i\overrightarrow{S}_{i+1}+
\frac{1}{2}\right)-\tilde H \sum_i S_i^z\, ,$ with $J$ and $\tilde H$ the
interaction strength and magnetic field. The TBA result for the free energy
per unit length is $ F(\tilde H,\beta)=\tilde e_0-\beta^{-1}\inti dk\, f(k)
\ln(1+\tilde \eta_1(k))\, ,$ ($\tilde e_0$ is the zero point energy and
$f(k)=1/[2\cosh(\pi k)]$, $\tilde f=f+\sum_n\beta_nf^{(n)}$) with
$\tilde\eta_1(k)$ satisfying
\begin{subequations}\label{TBAeqXXX}
\begin{align}
&\ln \tilde \eta_1(k)=-2\pi\beta J \tilde f(k)+f\ast\ln(1+\tilde\eta_2(k))\, ,\\
&\ln \tilde \eta_n(k)=f\ast\left[\ln(1+\tilde\eta_{n-1}(k))+\ln(1+\tilde\eta_{n+1}(k))\right]\, ,\ \ n=2,\cdots,\infty\, ,
\end{align}
\end{subequations}
together with the asymptotic condition $\lim_{n\rightarrow\infty} \ln
\tilde\eta_n(k)/n=\beta \tilde H$. A more compact result can be obtained using the
quantum transfer matrix \cite{K1,K2,K2b} yielding $ F(\tilde H,\beta)=\tilde
e_0-\beta^{-1}\inti dk\, f(k)[\ln(1+\tilde a_1(k))+\ln(1+\tilde
a_2(k))]\, ,$ with only two auxiliary functions satisfying
\begin{subequations}\label{TBAeqXXXQTM}
\begin{align}
&\ln \tilde a_1(k)=-\beta[2\pi J \tilde f(k)-\tilde H/2]+\tilde K_0\ast\ln(1+\tilde a_1(k))-\tilde K_2\ast\ln(1+\tilde a_2(k))\, ,\\
&\ln \tilde a_2(k)=-\beta[2\pi J \tilde f(k)+\tilde H/2]-\tilde K_1\ast\ln(1+\tilde a_1(k))+\tilde K_0\ast\ln(1+\tilde a_2(k))\, .
\end{align}
\end{subequations}
where
\[
\tilde K_0(k)=\frac{1}{2\pi}\inti  \frac{e^{i k x} e^{-|x|/2}}{2\cosh\frac{x}{2}}\, dx\, , \ \
\tilde K_1(k)=\frac{1}{2\pi}\inti  \frac{e^{i k x} e^{x-|x|/2}}{2\cosh\frac{x}{2}}\, dx\, , \ \
\tilde K_2(k)=\frac{1}{2\pi}\inti  \frac{e^{i k x} e^{-x-|x|/2}}{2\cosh\frac{x}{2}}\, dx\, . \ \
\]
Comparing the two systems of equations (\ref{TBAeqXXX}) and
(\ref{TBAeqXXXQTM}) and their corresponding expressions for the free energy we
notice that a) $\ln(1+\tilde\eta_1(k))=\ln(1+\tilde a_1(k))+\ln(1+\tilde
a_2(k))$ and b) the driving terms (modulo the magnetic field) in the QTM
system are the same as the driving term of the integral equation for $\ln
\tilde\eta_1(k)$. Due to the similar form of the TBA equations for the XXX
spin chain (\ref{TBAeqXXX}) -- with practically general function $\tilde f$ --
and the 2CBG (\ref{TBAeq}) it is natural
that a similar system like (\ref{TBAeqXXXQTM}) can be derived via: a)
$k\rightarrow k/c\, $ b) $\ln(1+\eta_1(k))=\ln(1+ a_1(k))+\ln(1+ a_2(k))$ and
c) replacement of the driving terms with $-\beta(k^2-\mu\mp H)+a_3\ast
f\ast\ln(1+\eta_1(k))=$ $-\beta(k^2-\mu\mp H)+a_3\ast f\ast(\ln(1+
a_1(k))+\ln(1+ a_2(k)))$.  Performing these transformations we obtain
(\ref{GCPgas}) for the grandcanonical potential and the following system of
integral equations
\begin{subequations}\label{TBAeqQTMM}
\begin{align}
&\ln  a_1(k)=-\beta(k^2-\mu+ H)+(a_3\ast f+\tilde K_0)\ast\ln(1+\tilde a_1(k))+(a_3\ast f-\tilde K_2)\ast\ln(1+\tilde a_2(k))\, ,\\
&\ln  a_2(k)=-\beta(k^2-\mu- H)+(a_3\ast f-\tilde K_1)\ast\ln(1+\tilde a_1(k))+(a_3\ast f+\tilde K_0)\ast\ln(1+\tilde a_2(k))\, .
\end{align}
\end{subequations}
Finally,   it is easy to
show that $(a_3\ast f+\tilde K_0)(k)=K_0^B(k)\, ,\ $ $(a_3\ast f-\tilde
K_2)(k)=K_1^B(k+i\epsilon)\, ,$ $(a_3\ast f-\tilde K_1)(k)=K_2^B(k-i\epsilon)$
by taking the Fourier transform and applying the convolution theorem,
completing the derivation.


\section{Density profiles and local correlation functions}\label{SD}

The experimentally relevant case of a trapped system ($V(x)\ne 0$ in
(\ref{Hc})) can be investigated using the exact solution of the uniform
system, Eqs.~(\ref{GCPgas}) and (\ref{NLIEgas}), and the local density
approximation. Under this approximation, which is valid for a slowly varying
potential, the system in a trap can be described locally \cite{KGDS2} as a
uniform gas with chemical potential and magnetic field defined by
\be\label{LDA}
\mu(x)=\mu(0)-V(x)\, , \ \ H(x)=H(0)\, ,
\ee
with $\mu(0)$ and $H(0)$ the chemical potential and magnetic field in the
center of the trap.  Therefore, for given values of temperature $T$, coupling
strength $c$, chemical potential and magnetic field in the center of the trap
$\mu(0)$ and $H(0)$, relevant thermodynamic quantities at a distance $x$ from
the center of the trap can be computed using Eqs.~(\ref{GCPgas}) and
(\ref{NLIEgas}) with $\mu$ and $H$ replaced by (\ref{LDA}).

In this section we will be mainly concerned with the calculation of the
density profiles in the trap and the normalized finite temperature local
density-density correlator $g_2^{(T)} $\cite{GSh,KGDS1,KGDS2}. For a uniform
system in thermal equilibrium the linear densities of the two types of bosons
can be obtained from the derivatives of the grandcanonical potential per unit
length (\ref{GCPgas}):
\begin{align}\label{n}
n_1=-\frac{1}{2}\left(\frac{\6 \phi}{\6 \mu}+\frac{\6 \phi}{\6 H}\right)\, ,\ \ \
n_2=-\frac{1}{2}\left(\frac{\6 \phi}{\6 \mu}-\frac{\6 \phi}{\6 H}\right)\, .
\end{align}
In the case of the local correlator, as it was shown in \cite{KGDS2}, a simple
application of the Hellmann-Feynman theorem allows to express this important
observable as the derivative of the grandcanonical potential with respect to
the coupling strength
\be\label{g}
g_2^{(T)}(x)=\frac{\sum_{a,b}\langle\Psi_a^\dagger(x)\Psi_b^\dagger(x)\Psi_a(x)\Psi_b(x)\rangle_T}
{\left(\sum_{a}\langle \Psi_a^\dagger(x)\Psi_a(x)\rangle_T\right)^2}=\frac{1}{(\sum_a n_a)^2}\frac{\6 \phi}{\6 c}\, .
\ee
The local density correlation function of the uniform system which presents a
nonmonotonic behaviour as a function of temperature was investigated by Caux,
Klauser and van den Brink in \cite{CKB,KC}.

A uniform 2CBG can be described with the help of three dimensionless
parameters: interaction parameter $\gamma\equiv c/(n_1+n_2)=c/n$, reduced
temperature $\tau\equiv T/T_d=T/n^2$ and polarization $P\equiv
(n_1-n_2)/(n_1+n_2)$. While this description can be naturally extended to the
inhomogeneous case by replacing the density $n$ with the local value in the
trap $n(x)$ it is preferable to work (see \cite{KGDS2}) with
\be
\gamma(x)=\frac{c}{n(x)}\, ,\ \ t=\frac{T}{c^2}\, ,\ \ P(x)=\frac{n_1(x)-n_2(x)}{n_1(x)+n_2(x)}
\ee
where we have introduced a new temperature parameter $t=\tau(x)/\gamma^2(x)$
which has the advantage of not depending on the local density $n(z)$ and
characterizes the entire system in thermal equilibrium.

\subsection{Regimes in a trapped and uniform 2CBG}

Analytical results on the local density correlation function of the 2CBG at
arbitrary polarization (any value of $H$) are very scarce in the literature (a
notable exception is \cite{KC}). However in the case when the system is fully
polarized $(P=1)$ the 2CBG is equivalent to the Lieb-Liniger model for which a
classification of different regimes exist. The fully polarized case represents
a good starting point for the investigation of the $\gamma-t-P$ parameter space
if we take into account that 1D bosons favour ferromagnetic behavior for spin
independent interactions.
\begin{figure}
\includegraphics[width=0.75\linewidth]{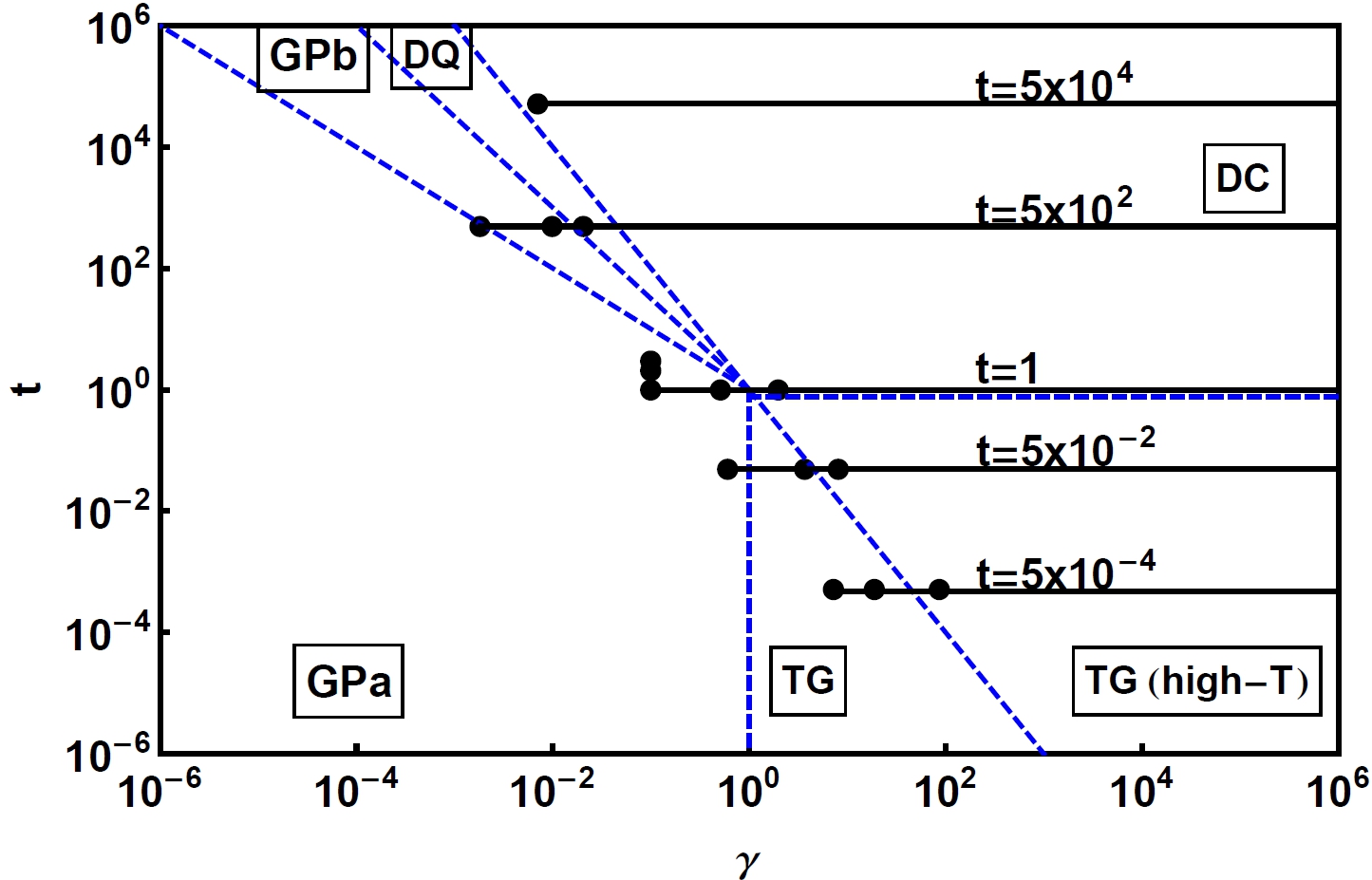}
\caption{(Color online) Diagram of the different regimes of the fully polarized uniform 2CBG (Lieb-Liniger model) in the
$\gamma-t$ plane (boundaries depicted as dashed lines)\cite{KGDS2}. The six regimes depicted are defined as follows:
Tonks-Girardeau (TG) $\gamma\gg 1\, ,t\ll \gamma^{-2}$;
High-Temperature Tonks-Girardeau (TG-high T) $\gamma\gg 1\, ,\gamma^{-2}\ll t\ll 1$;
Gross-Pitaevskii a (GPa) $\gamma\ll 1\, ,t\ll \gamma^{-1}$;
Gross-Pitaevskii b (GPb) $\gamma^{-1}\ll t\ll \gamma^{-3/2}$;
Decoherent Quantum (DQ) $\gamma^{-3/2}\ll t\ll\gamma^{-2}$;
Decoherent Classical $t\gg max(\gamma^{-2},1)$.
The black dots are the values of  $\gamma(0)$ and $t$ in the center of the trap of the representative
systems considered in the next section with the continuous black lines at $t=\{5\times 10^{-4},5\times 10^{-2},1 ,5\times 10^{2},5\times 10^{4}\}$
representing the values of $\gamma(x)$ in the trap for these systems.
}
\label{PD}
\end{figure}
In the case of the Lieb-Liniger gas, Kheruntsyan, Gangardt, Drummond and
Shlyapnikov \cite{KGDS1,KGDS2} identified three regimes (each containing two
sub-regimes) using the properties of the local density correlator as a
function of coupling strength and temperature. Using $\gamma$ and $t$ as
parameters this regimes can be characterized as follows:

{\it Strong coupling regime.} In the limit of strong interaction (or
low-densities), $\gamma\gg 1$, the system becomes equivalent with a system of
free fermions and therefore the local density correlation function is
suppressed, $g_2^{(T)}\ll 1$, as a consequence of the fermionic nature of the
wavefunctions. Two sub-regimes are distinguished: a) Tonks-Girardeau
characterized by $\gamma\gg 1$ and temperature smaller than the degeneracy
temperature $ t\ll \gamma^{-2}$ and b) High-Temperature Tonks-Girardeau for
which $\gamma^{-2}\ll t\ll 1$.

{\it Gross-Pitaevskii regime.} In the limit of vanishing interactions (or
large densities), $\gamma \ll 1$, the local density correlation function
approaches $g_2^{(T)}\simeq 1$ which is the value for free bosons. The two
subregimes are: Gross-Pitaevskii (a) $(\gamma\ll 1 \, ,t\ll \gamma^{-1})$ and
Gross-Pitaevskii (b) $(\gamma^{-1}\ll t\ll \gamma^{-3/2})$.

{\it Decoherent regime.} At high temperatures the local pair correlation is
close to $g_2^{(T)}\simeq 2$.  The two subregimes are: Decoherent Quantum
$(\gamma^{-3/2}\ll t\ll \gamma^{-2})$ and Decoherent Classical $(t\gg
max(\gamma^{-2},1))$.

For a trapped system this classification still remains valid with the
possibility of different regimes coexisting within the trap. Due to the fact
that for large distances from the center of the trap the density vanishes
$n(x)\rightarrow 0, \gamma(x)\rightarrow\infty$ this means that the tails of
the system are always in the TG regime (for $t<1$) or the DC regime (for
$t>1$).  For given values of the interaction and temperature parameters
$\gamma(0)$ and $t$ the density profile follows a straight line at $t$
parallel with the $\gamma$ axis as it is shown in Fig. \ref{PD} for
$t=\{5\times 10^{-4},5\times 10^{-2},1 ,5\times 10^{2},5\times 10^{4}\}$. All
these considerations are valid for the fully polarized 2CBG. In the following
sections we are going to investigate the influence of the polarization on the
density profiles and local density correlation function.

\subsection{Density profiles and local correlation functions at high-temperatures}

\begin{figure}
\includegraphics[width=0.9\linewidth]{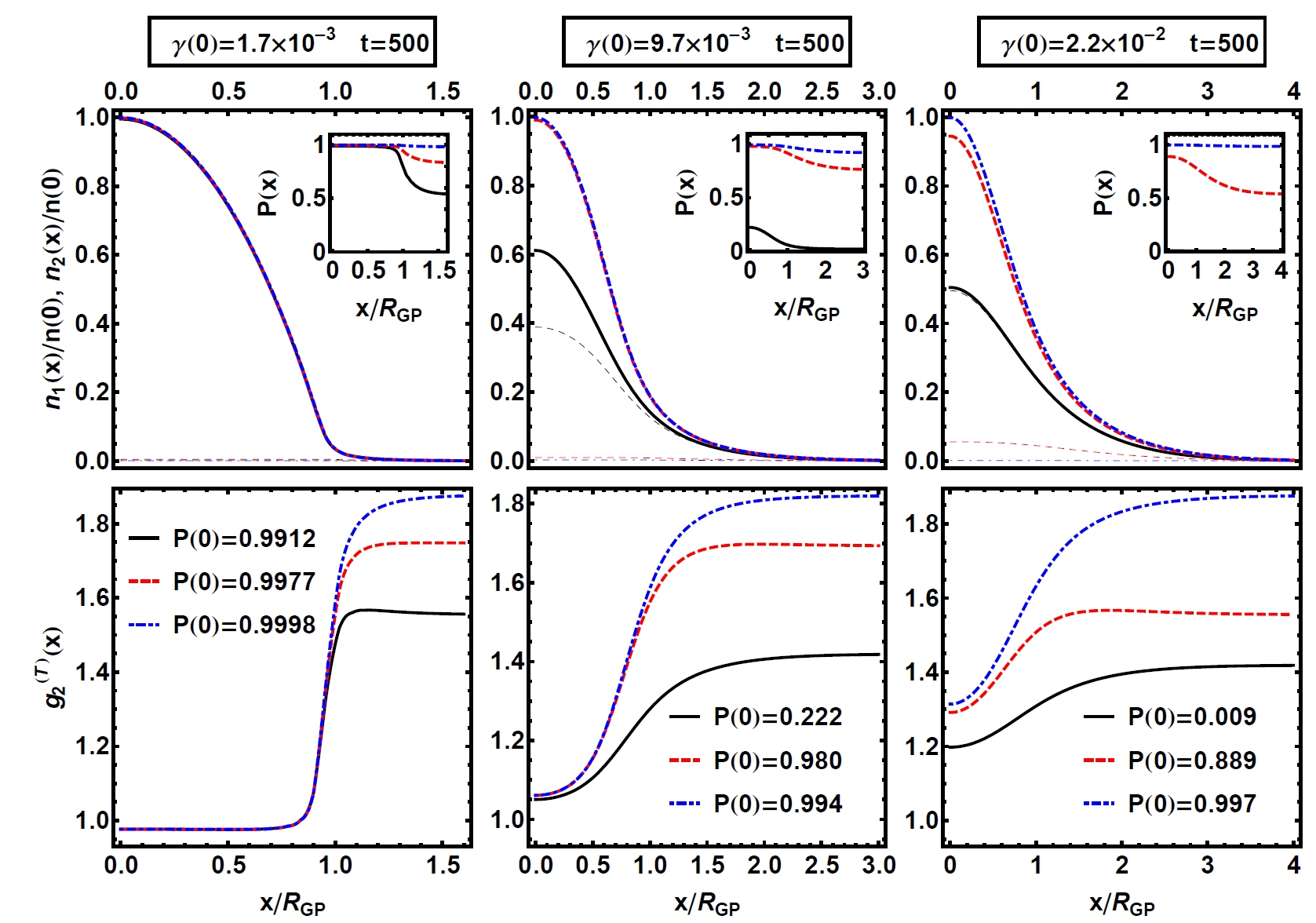}
\caption{(Color online) Density profiles (upper panels) and local density correlation
  function $g_2^{(T)}(x)$ (lower panels) of a trapped 2CBG at temperature
  $t=5\times 10^2$ for three values of $\gamma(0)=\{1.7\times 10^{-3},
  9.7\times 10^{-3},2.2\times 10^{-2}\}$ and various polarizations. In the
  upper panels the thick continuous, thick dashed, and thick dot-dashed (thin continuous,
  thin dashed and thin dot-dashed) lines represent the majority
  (minority) components and $n(0)$ is the total density in the center of the
  trap. The insets depict the variation of the polarization in the trap.  }
\label{T500}
\end{figure}
For a given value of the coupling strength $c$ the region of high temperatures
is defined by $t\gg 1$. We have calculated the density profiles and the local
correlation function on a fine grid of the trap using Eqs.~(\ref{GCPgas}),
(\ref{NLIEgas}), (\ref{n}), (\ref{g}) and the LDA chemical potential and
magnetic field (\ref{LDA}). The results for three values of $\gamma(0)$ at
$t=5\times 10^2$ and various polarizations are shown in Fig.~\ref{T500}. In
Fig.~\ref{T500} the distance $x$ is plotted in units of the Thomas-Fermi
radius in the Gross-Pitaevskii regime \cite{KGDS2}, $R_{GP}=(6 N
c/\omega^2)^{1/3}$ with $N$ the number of particles in the trap
$N=\int (n_1(x)+n_2(x))\, dx$,  and $\omega$
is the trap oscillation frequency. The left panels of Fig.~\ref{T500} depicts
the situation in which the gas in the center of the trap is in the GPa regime
with three polarizations very close to $1$ (see Fig.~\ref{PD} where
$\gamma(0)=1.7\times 10^{-3}$ is the leftmost point on the $t=5\times 10^2$
line) for which $g_2^{(T)}(0)\simeq 1$.  While in the center of the trap the
local correlator is almost constant showing no dependence on $P(0)$ the
situation changes dramatically in the tails where the gas is in the DC
regime.
Here we see that even small variations of the center polarization have
a significant effect in the tail behavior of the local correlator
and polarization.
For larger values of $\gamma(0)$ (center and right
panels of Fig.~\ref{T500}) the effect of the polarization in the center of the
trap on the local correlation function becomes more pronounced with
$g_2^{(T)}(x)$ increasing both as a function of $\gamma$ and $P(0)$. In
addition we see that the density profiles present larger tails as we approach
the DC regime.
\begin{figure}
\includegraphics[width=0.4\linewidth]{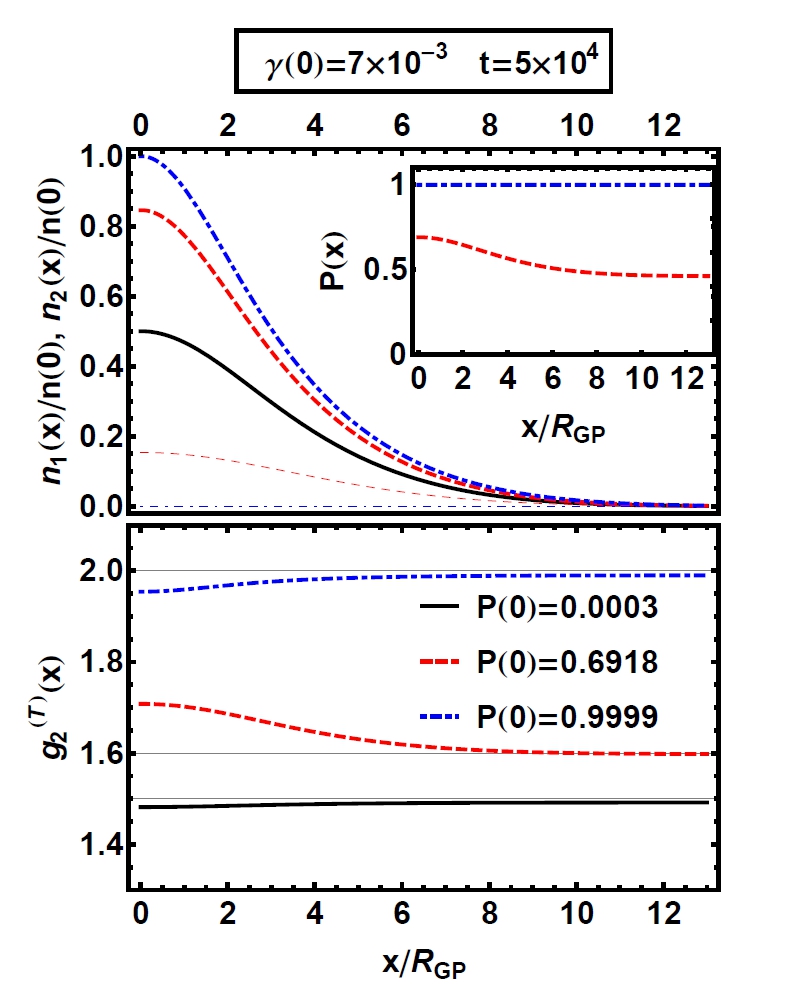}
\caption{(Color online) Density profiles (upper panel) and local density correlation function
  $g_2^{(T)}(x)$ (lower panel) of a trapped 2CBG at temperature $t=5\times
  10^4$ for $\gamma(0)=7\times 10^{-3}$ and polarizations $P(0)=\{0.0003,
  0.6918, 0.9999\}$. In the
  upper panels the thick continuous, thick dashed, and thick dot-dashed (thin continuous,
  thin dashed and thin dot-dashed) lines represent the majority
  (minority) components and $n(0)$ is the total density in the center of the
  trap. The inset depicts the variation of the
  polarization in the trap. In the lower panel the thin lines denote the
  asymptotic values given by Eq.~(\ref{asymptW})}
\label{T50000}
\end{figure}
For large values of $\gamma$ and $t$, when the 2CBG is in the DC regime, it
can be shown using Wick's theorem and Boltzmann distribution \cite{KC,CKB} that
for a uniform system the asymptotic values of the local correlator and
polarization are
\be\label{asymptW}
g_2^{(T)}=1+\frac{e^{2\beta(\mu+H)}+e^{2\beta(\mu-H)}}{(e^{\beta(\mu+H)}+e^{\beta(\mu-H)})^2}\, ,\ \ \
P=\frac{e^{\beta(\mu+H)}-e^{\beta(\mu-H)}}{e^{\beta(\mu+H)}+e^{\beta(\mu-H)}}\, .
\ee
This means that in the case of a trapped system at very large temperatures the
local correlator at the edge of the sample should tend asymptotically to $2$
for $P(0)=1$ (very large magnetic field) and to $1.5$ for $P(0)=0$ (vanishing
magnetic field). This can be seen clearly in Fig.~\ref{T50000} where we
present the density profiles and local correlator for $t=5\times 10^4$ and
$\gamma(0)=7\times 10^{-3}$. Here the entire sample is in the DC regime and
the effect of $P(0)$ on $g_2^{(T)}$ is very large.

\subsection{Density profiles and local correlation functions at intermediate temperatures}

\begin{figure}
\includegraphics[width=0.9\linewidth]{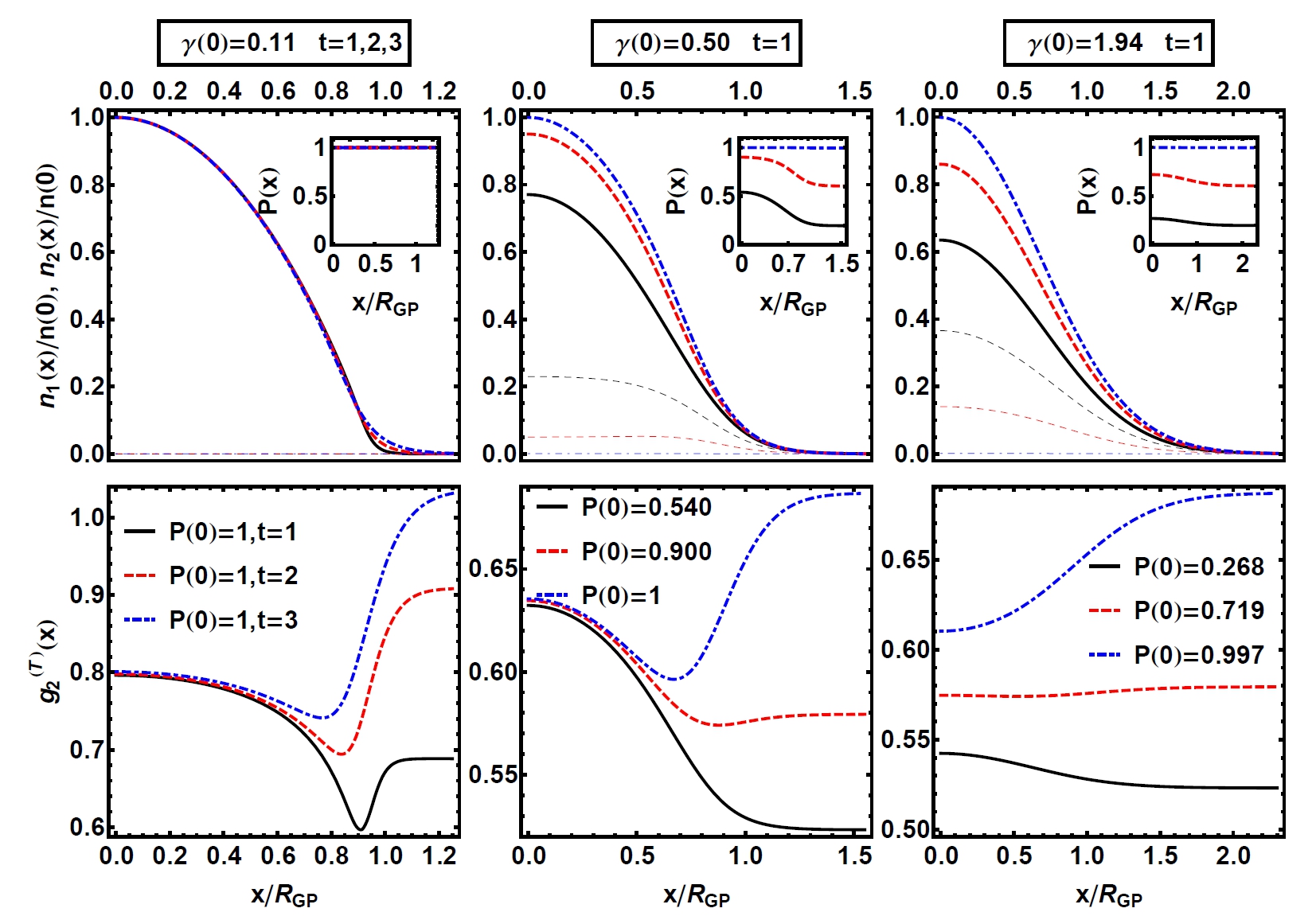}
\caption{(Color online) Density profiles (upper panels) and local density correlation
  function $g_2^{(T)}(x)$ (lower panels) of a trapped 2CBG at different
  temperatures and polarizations for three values of $\gamma(0)=\{0.11, 0.5,
  1.94\}$ and various polarizations. In the
  upper panels the thick continuous, thick dashed, and thick dot-dashed (thin continuous,
  thin dashed and thin dot-dashed) lines represent the majority
  (minority) components and $n(0)$ is the total density in the center of the
  trap. The insets depict the variation of
  the polarization in the trap.  }
\label{Transition}
\end{figure}

For temperatures close to $1$ we can encounter the situation in which for
small variations of $t$ the profile of the local correlation function changes
drastically as it can be seen in the left panels of Fig.~\ref{Transition} for
the fully polarized gas with $\gamma(0)=0.11$ and $t=\{1,2,3\}$. While the
density profiles are almost identical, $g_2^{(T)}(x)$ presents a highly
nonmonotonic behavior with asymptotic values at the edge of the samples
varying from $0.65$ for $t=1$ to $1.05$ for $t=3$. A similar phenomenon can be
seen for fixed temperature and various polarizations in the center and right
panels of Fig.~\ref{Transition}. For $t=1$ and $\gamma(0)=0.5$ the local
correlation functions are almost equal in the center of the system but for
large distances $g_2^{(T)}(x)$ is suppressed for small polarizations and
enhanced for large polarizations. The role of $P(0)$ increases at strong
coupling (large $\gamma(0)$) as shown in the right panel where for
$\gamma(0)=1.94$ and $t=1$ the local correlators present a strong dependence
on the polarization even in the center of the trap.

\subsection{Density profiles and local correlation functions at low-temperatures}

\begin{figure}
\includegraphics[width=0.9\linewidth]{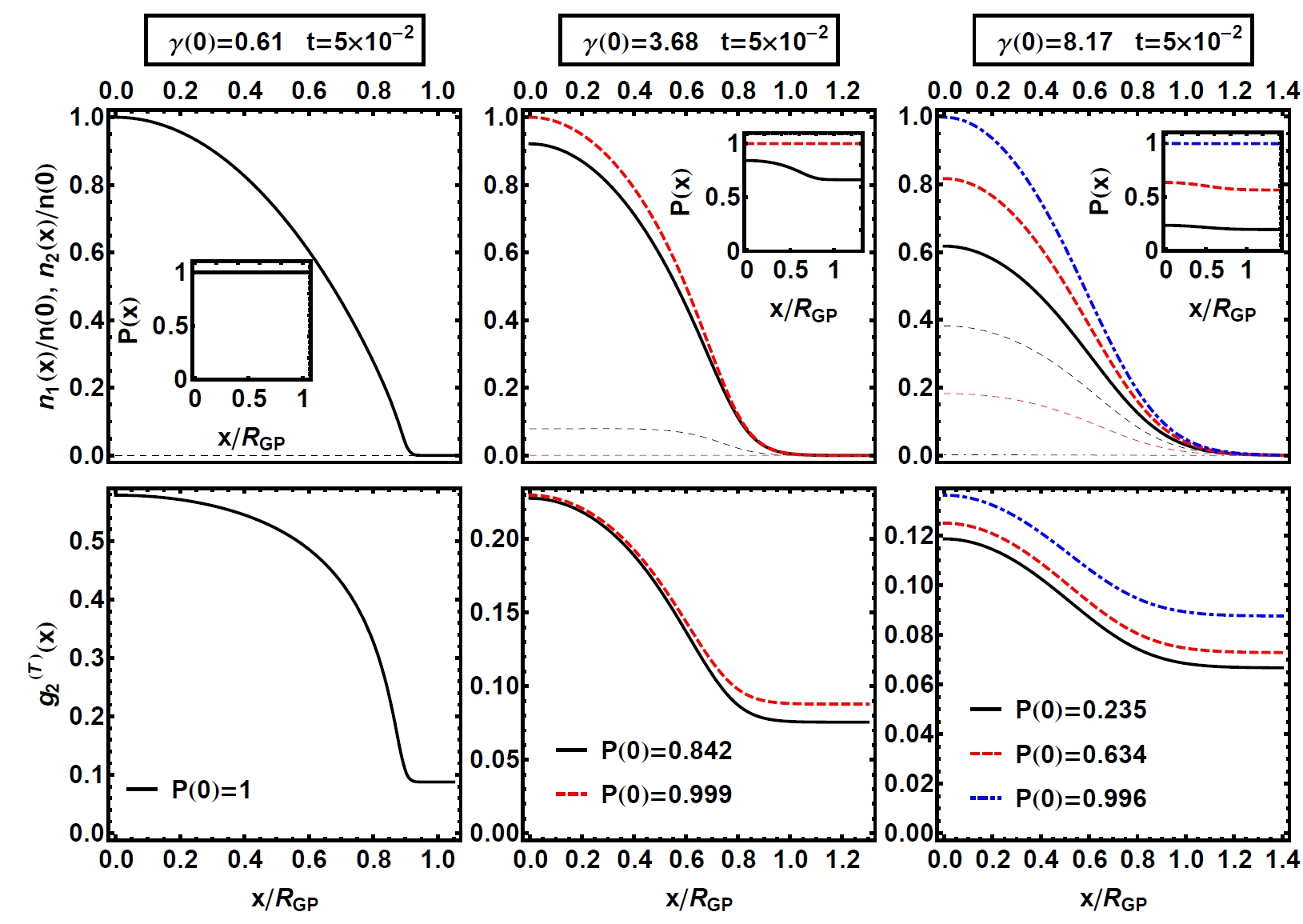}
\caption{(Color online) Density profiles (upper panels) and local density correlation
  function $g_2^{(T)}(x)$ (lower panels) of a trapped 2CBG at temperature
  $t=5\times 10^{-2}$ for three values of $\gamma(0)=\{0.61, 3.68, 8.17\}$ and
  various polarizations. In the
  upper panels the thick continuous, thick dashed, and thick dot-dashed (thin continuous,
  thin dashed and thin dot-dashed) lines represent the majority
  (minority) components and $n(0)$ is the total density in the center of the
  trap. The insets depict the variation of the
  polarization in the trap.  }
\label{T005}
\end{figure}
At low-temperatures ($t\ll 1$) the edges of the system will always be in the
TG regime and, therefore, the local correlator will be strongly suppressed.
In addition, we expect the density profiles to be sharper as it can be seen in
Fig.~\ref{T005} for $t=5\times 10^{-2}$. As expected $g_2^{(T)}(x)$ decreases
in the tails of the distribution until it reaches a limiting value which
depends on $\gamma(0)$.  The influence of the center polarization becomes more
pronounced at stronger coupling with $g_2^{(T)}(0)$ monotonically increasing
as a function of $P(0)$. It should be noted that numerical investigations at
low-temperatures and low-polarizations ($H\rightarrow 0$) are very difficult
due to the fact that in this regime the NLIEs (\ref{NLIEgas}) become
numerically unstable (the same phenomenon happens in the case of the TBA
equations (\ref{TBAeq})) as a consequence of the first-order phase transition
at $T=0\, ,H=0.$ For $t=5\times 10^{-4}$ results are shown in
Fig.~\ref{T00005}.
\begin{figure}
\includegraphics[width=0.9\linewidth]{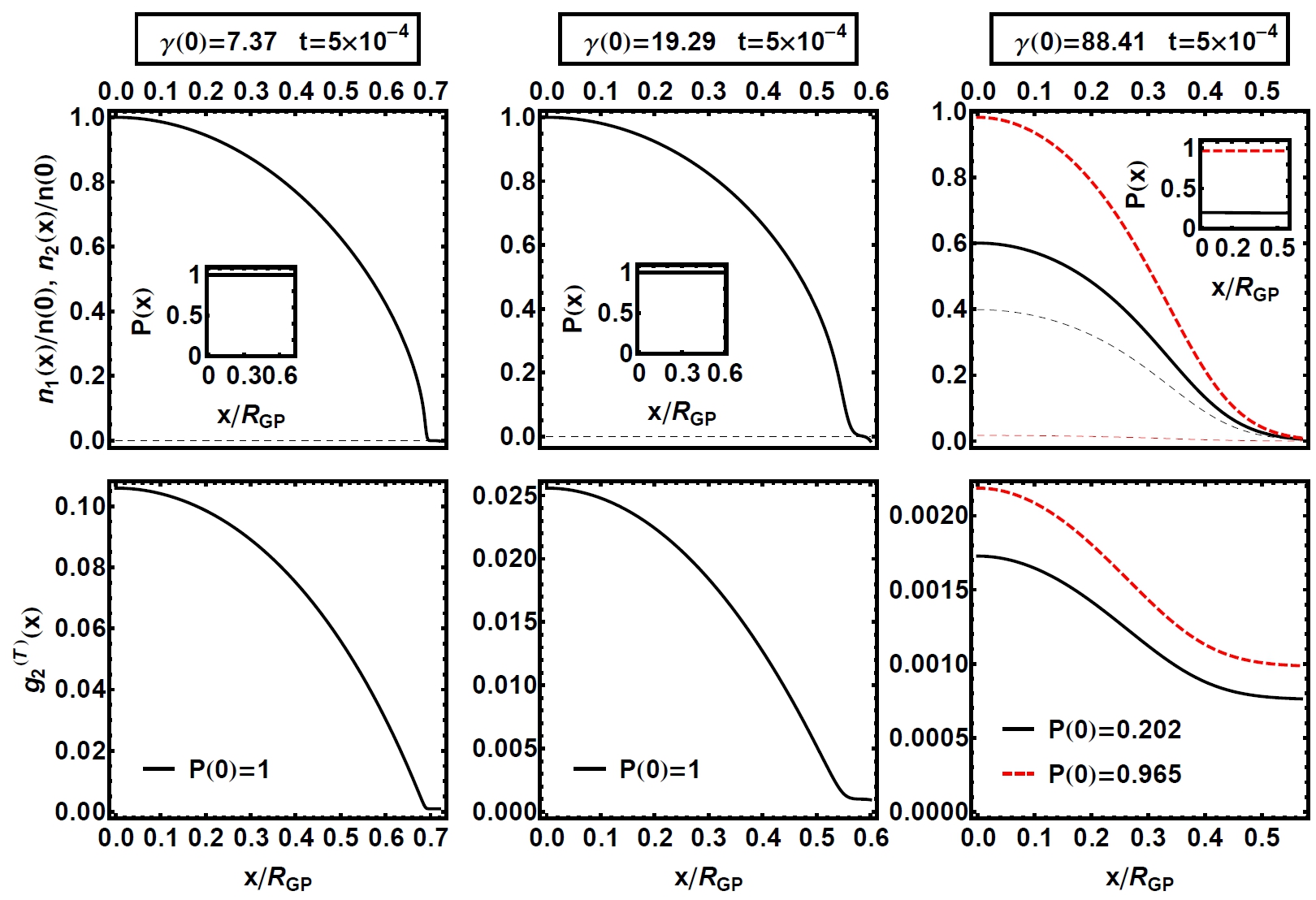}
\caption{(Color online) Density profiles (upper panels) and local density correlation
  function $g_2^{(T)}(x)$ (lower panels) of a trapped 2CBG at temperature
  $t=5\times 10^{-4}$ for three values of $\gamma(0)=\{7.37, 19.29, 88.41\}$
  and various polarizations. In the
  upper panels the thick continuous, thick dashed (thin continuous,
  thin dashed) lines represent the majority
  (minority) components and $n(0)$ is the total density in the center of the
  trap. The insets depict the variation of the
  polarization in the trap.  }
\label{T00005}
\end{figure}

The numerical data presented above show that the polarization at the center of
the trap has a profound influence on the density profiles and the local
density correlation function of the inhomogeneous system.  $g_2^{(T)}(0)$ is a
monotonic increasing function of $P(0)$ with the influence of the center
polarization being more pronounced at strong coupling. Also, the local
correlation function which takes a wide range of values between $0$ and $2$
serves as a better discriminant than the density profiles for the
different regimes of the 2CBG.

\section{The 2CBG as the continuum limit of the  Perk-Schultz spin chain with $(---)$ grading}\label{S2}

Our method of deriving the thermodynamic description of the 2CBG,
Eqs.~(\ref{GCPgas}) and (\ref{NLIEgas}), is based on the fact that the
continuum model can be obtained by performing a specific limit of an
appropriate lattice model for which the quantum transfer matrix technique can
be employed. For the two-component Bose gas the relevant lattice model is the
$q=3$ Perk-Schultz spin chain. In this section we are going to present this
scaling limit and show how the grandcanonical partition function of the 2CBG
can be derived from the canonical partition function of the spin chain.

The Hamiltonian of the $q=3$ Perk-Schultz spin chain with arbitrary grading is \cite{BVV,dV,dVL,L}
\begin{align}\label{Hs}
\cal{H}_{PS}&=J\varepsilon_1\sum_{j=1}^L\left(
\cos\g\sum_{a=1}^3\varepsilon_a e_{aa}^{(j)}e_{aa}^{(j+1)}+
\sum_{\substack{a,b=1\\ a \ne b}}^3
e_{ab}^{(j)}e_{ba}^{(j+1)}
+i\sin\g\sum_{\substack{a,b=1\\ a\ne b}}^3
\mbox{sign}(a-b)e_{aa}^{(j)}e_{bb}^{(j+1)}\right)-
\sum_{j=1}^L\sum_{a=1}^3 h_a e_{aa}^{(j)}\, ,
\end{align}
where $L$ is the number of lattice sites, $\gamma\in(0,\pi)$ determines the
anisotropy, ($q=e^{i\gamma}$), $J>0$ determines the strength of the
interaction and $h_a$ are chemical potentials. The parameters
$(\varepsilon_1,\varepsilon_2,\varepsilon_3)$ can take the values $\pm 1$ and
we will call them the grading of the system. In (\ref{Hs}),
$e^{(j)}_{ab}=\mathbb{I}_3^{\otimes j-1}\otimes e_{ab}\otimes
\mathbb{I}_3^{\otimes L-j}\, ,$ with $e_{ab}$ the $3$-by-$3$ matrix with
elements $(e_{ab})_{ij} =\delta_{a i}\delta_{b j}$ and $\mathbb{I}_3$ the
$3$-by-$3$ unit matrix. The Perk-Schultz Hamiltonian is the sum of
${\cal{H}}_b$ (in the brackets) which is the fundamental spin-model associated
with the trigonometrical Perk-Schultz $\R$-matrix \cite{dV} and ${\cal{H}}_c$
the chemical potential part which does not break the integrability. In the
next section and Appendix \ref{ap11} we will show that for
$(\varepsilon_1,\varepsilon_2, \varepsilon_3)=(---)$ the energy spectrum is
\be\label{energyPS}
E_{PS}=\sum_{j=1}^M e_0(v^{(1)}_j)+M_1(h_2-h_3)+E_0\, ,\ \ \
e_0(v)=J\frac{\sin^2\g}{\sin(v-\g)\sin v}+h_1-h_2\, ,\ \ \ E_0=JL\cos\g-h_1L\, ,
\ee
with $\{v^{(1)}_j\}_{j=1}^M$ satisfying the BAEs
\begin{subequations}\label{BEs}
\begin{align}
& \left(\frac{\sin(\g-v_s^{(1)})}{\sin v_s^{(1)}}\right)^L=\prod_{\substack{j=1\\ j\ne s}}^M\frac{\sin(v_s^{(1)}-v_j^{(1)}-\g)}{\sin(v_s^{(1)}-v_j^{(1)}+\g)}
\prod_{p=1}^{M_1}(-)\frac{\sin(v_s^{(1)}-v_p^{(2)}+\g)}{\sin(v_s^{(1)}-v_p^{(2)})}\, ,\ \ \   s=1,\cdots, M\, ,\\
&\prod_{j=1}^M(-)\frac{\sin(v_l^{(2)}-v_j^{(1)}-\g)}{\sin(v_l^{(2)}-v_j^{(1)})}=\prod_{\substack{p=1\\p\ne l}}^{M_1}\frac{\sin(v_l^{(2)}-v_p^{(2)}-\g)}{\sin(v_l^{(2)}-v_p^{(2)}+\g)}\, ,\ \ \  l=1,\cdots, M_1\, .
\end{align}
\end{subequations}
First, we will show how the BAEs (\ref{BEc}) of the 2CBG can be obtained from
the BAEs (\ref{BEs}) of the Perk-Schultz spin chain. The spin chain is
characterized by the following set of parameters: number of lattice sites $L$,
anisotropy $\g$, strength of the interaction $J>0$, lattice constant $\delta$
and chemical potentials $h_1,h_2,h_3$.
Consider $v_s^{(1)}\rightarrow i\delta
k_s^{(1)}/ \epsilon+\g/2$ and $v_s^{(2)}\rightarrow i\delta
k_s^{(2)}/\epsilon+\g+\pi/2$ with $\epsilon\rightarrow 0$.  The BAEs
(\ref{BEs}) become
\begin{align*}
&\left((-)\frac{\sinh(\delta k_s^{(1)}/\epsilon-i\g/2)}{\sinh(\delta k_s^{(1)}/\epsilon+i\g/2)}\right)^L=
\prod_{\substack{j=1\\j\ne s}}^{M}\frac{\sinh(\delta k_s^{(1)}/\epsilon-\delta k_j^{(1)}/\epsilon -i\g)}{\sinh(\delta k_s^{(1)}/\epsilon-\delta k_j^{(1)}/\epsilon +i\g)}
\prod_{p=1}^{M_1}(-)\frac{\cosh(\delta k_s^{(1)}/\epsilon-\delta k_p^{(2)}/\epsilon+i\g/2)}{\cosh(\delta k_s^{(1)}/\epsilon-\delta k_p^{(2)}/\epsilon-i\g/2)}\, ,\\
&\prod_{j=1}^{M}(-)\frac{\cosh(\delta k_l^{(2)}/\epsilon-\delta k_j^{(1)}/\epsilon+i\g/2)}{\cosh(\delta k_l^{(2)}/\epsilon-\delta k_j^{(1)}/\epsilon-i\g/2)}
=\prod_{\substack{p=1\\p\ne l}}^{M_1}\frac{\sinh(\delta k_l^{(2)}/\epsilon-\delta k_p^{(2)}/\epsilon +i\g)}
{\sinh(\delta k_l^{(2)}/\epsilon-\delta k_p^{(2)}/\epsilon -i\g)}\, .
\end{align*}
In the second step we perform $\g=\pi-\epsilon$ with the result
\begin{subequations}\label{int1}
\begin{align}
&\left(\frac{\cosh(\delta k_s^{(1)}/\epsilon+i\epsilon/2)}{\cosh(\delta k_s^{(1)}/\epsilon-i\epsilon/2)}\right)^L=
\prod_{\substack{j=1\\j\ne s}}^{M}\frac{\sinh(\delta k_s^{(1)}/\epsilon-\delta k_j^{(1)}/\epsilon +i\epsilon)}{\sinh(\delta k_s^{(1)}/\epsilon-\delta k_j^{(1)}/\epsilon -i\epsilon)}
\prod_{p=1}^{M_1}\frac{\sinh(\delta k_s^{(1)}/\epsilon-\delta k_p^{(2)}/\epsilon-i\epsilon/2)}{\sinh(\delta k_s^{(1)}/\epsilon-\delta k_p^{(2)}/\epsilon+i\epsilon/2)}\, ,\\
&\prod_{j=1}^{M}\frac{\sinh(\delta k_l^{(2)}/\epsilon-\delta k_j^{(1)}/\epsilon-i\epsilon/2)}{\sinh(\delta k_l^{(2)}/\epsilon-\delta k_j^{(1)}/\epsilon+i\epsilon/2)}
=\prod_{\substack{p=1\\p\ne l}}^{M_1}\frac{\sinh(\delta k_l^{(2)}/\epsilon-\delta k_p^{(2)}/\epsilon -i\epsilon)}
{\sinh(\delta k_l^{(2)}/\epsilon-\delta k_p^{(2)}/\epsilon +i\epsilon)}\, .
\end{align}
\end{subequations}
Finally, taking the limit $L\rightarrow \infty$ like
$\mathcal{O}(1/\epsilon^2),$ $\delta\rightarrow 0$ like $
\mathcal{O}(\epsilon^2),$ such that $L_B=L\delta$ and $c=\epsilon^2/\delta$,
equating $M_1=M_1^B$ and using
\[
\frac{\cosh(\delta k_s^{(1)}/\epsilon+i\epsilon/2)}{\cosh(\delta k_s^{(1)}/\epsilon-i\epsilon/2)}\sim
\frac{1+ik_s^{(1)}\delta/2}{1-ik_s^{(1)}\delta/2}
\]
we find that (\ref{int1}) transform into the BAEs (\ref{BEc}) of the Bose
gas. Performing the same transformations in the expression for the energy we
obtain
\be\label{int2}
E_{PS}-E_0=\sum_{j=1}^M\left[J\delta^2(k_j^{(1)})^2-J\epsilon^2-J\epsilon^4/4\right]+(h_1-h_2)M+(h_2-h_3)M_1+\mathcal{O}(\epsilon^6)\, ,
\ee
which shows that in order to obtain the energy spectrum of the Bose gas
(\ref{eb}) we need to scale $J\rightarrow \infty$ like
$\mathcal{O}(1/\epsilon^4)$ and $h_1\rightarrow\infty$ like
$\mathcal{O}(1/\epsilon^2)$ such that $J\delta^2=1$ and
$J\epsilon^2+J\epsilon^4/4-h_1+h_2$ is finite. Therefore, in the thermodynamic
limit performing this scaling we have
\be\label{eqz}
e^{\beta E_0}Z(h_1,h_2,h_3,\beta)\rightarrow \mathcal{Z}(\mu,H,\beta)\, ,
\ee
where $\beta=1/T$, $Z(h_1,h_2,h_3,\beta)$ is the canonical partition function
of the spin chain and $\mathcal{Z}(\mu,H,\beta)$ is the grandcanonical
partition function of the 2CBG.

This scaling limit is independent of the temperature of the systems. However,
taking into account that we are interested in the thermodynamic behavior it is
more useful to introduce a scaling limit which involves also the
temperature. This is justified because from the thermodynamics point of view
the energy spectrum enters via the expression $e^{-\beta E}$.  Multiplying
with $\beta$ the identity (\ref{int2}) we find that $\beta (E_{PS}-E_0)=\bar
\beta E_{Bose}$, with $\bar \beta$ the inverse temperature in the continuum
system, if $\beta=\bar\beta/\delta^2$, $J=1$ $h_1\rightarrow 0$ like
$\mathcal{O}(\epsilon^2)$, such that $J\epsilon^2/\delta^2-h_1/\delta^2$ is
finite and $h_2,h_3\rightarrow 0$ like $\mathcal{O}(\epsilon^4)$. In this case
Eq.~(\ref{eqz}) becomes
\be\label{rel1}
e^{\beta E_0}Z(h_1,h_2,h_3,\beta)\rightarrow \mathcal{Z}(\mu,H,\bar\beta)\, .
\ee
In Table \ref{table1} we present in a compact form this continuum limit (the
spectral parameter in the lattice model scales like $v\rightarrow i\delta
k/\epsilon$) which will be used in the next sections to derive the
thermodynamic behavior of the 2CBG from similar result for the $(---)$
Perk-Schultz spin chain.
\begin{table}
\caption{\label{table1} Parameters for the $(---)$ Perk-Schultz spin chain and the spinor Bose gas.}
\begin{center}
\begin{tabular}{|l|l|}
  \hline
 $(---)$ Perk-Schultz spin chain & Spinor  Bose gas \\
  \hline \hline
 lattice constant $\delta\rightarrow\mathcal{O}(\epsilon^2)$            & particle mass $m=1/2$\\
 number of lattice sites $L\rightarrow\mathcal{O}(1/\epsilon^2)$        & physical length $L_B=L\delta$\\
 interaction strength $J>0$                                             & repulsion strength $c=\epsilon^2/\delta$\\
 chemical potential $h_1\rightarrow \mathcal{O}(\epsilon^2)$            & chemical potential $\mu=\frac{J\epsilon^2}{\delta^2}-\frac{h_1}{\delta^2}+\frac{J\epsilon^4}{4\delta^2}+\frac{1}{2\delta^2}(h_3+h_2)$\\
 chemical potentials $h_2,h_3\rightarrow \mathcal{O}(\epsilon^4)$         & magnetic field $H=(h_3-h_2)/(2\delta^2)$\\
 inverse temperature $\beta$                                            & inverse temperature $\overline\beta=\beta \delta^2$\\
 anisotropy $\g=\pi-\epsilon$                            &                                                       \\
  \hline
\end{tabular}
\end{center}
\end{table}

\section{Algebraic Bethe Ansatz and the Quantum Transfer Matrix of the $(---)$ Perk-Schultz spin chain}\label{S3}

In the next section we are going to derive a finite set of NLIEs
characterizing the thermodynamic of the $(---)$ Perk-Schultz spin chain using
the associated quantum transfer matrix. Therefore, it will be useful to recall
some basic notions of algebraic Bethe ansatz \cite{KBI}.

\subsection{ABA for the Hamiltonian}

The Hamiltonian (\ref{Hs}) represents the fundamental spin model (Chap. VI of
\cite{KBI}) associated with the trigonometric $q=3$ Perk-Schultz $\R$-matrix
\cite{dV,dVL} defined by

\be\label{rm}
\R(v,w)=\sum_{a=1}^3\R_{aa}^{aa}(v,w)e_{aa}\, \otimes e_{aa}+
\sum_{\substack{a,b=1\\a\ne b}}^3\R_{ab}^{ab}(v,w)\, e_{aa}\otimes e_{bb}
+\sum_{\substack{a,b=1\\a\ne b}}^3 \R_{ba}^{ab}(v,w)\, e_{ab}\otimes e_{ba}\, ,
\ee
with
\footnote{ In \cite{dV,dVL} the authors considered the more general case
  $\R_{ab}^{ab}(v,w)=G_{ab}\frac{\sin(v-w)} {\sin \g}$ with
  $G_{ab}G_{ba}^{-1}=1$ (no summation). The situation considered in our paper
  corresponds to $G_{ab}=1$}
\be
\R_{aa}^{aa}(v,w)=\frac{\sin[\g+\varepsilon_a(v-w)]}{\sin\g}\, , \ \ \ \R_{ab}^{ab}(v,w)\underset{a\ne b}{=}\frac{\sin(v-w)}{\sin \g}\, ,
\ \ \ \R_{ba}^{ab}(v,w)\underset{a\ne b}{=}e^{isgn(a-b)(v-w)}\, ,
\ee
and $(e_{ab})_{ij}=\delta_{ia}\delta_{jb}$ the canonical basis in the space of
$3$-by-$3$ matrices.  This $9$-by-$9$ $\R$-matrix has the property that
$\R(0,0)=\Pe$, with $\Pe_{b_1\, b_2}^{a_1a_2}= \delta_{a_1b_2}\delta_{a_2b_1}$
the permutation matrix, and satisfies the Yang-Baxter equation
\be\label{ybe}
\sum_{a',b',c'=1}^3 \R_{a'b'}^{a\ b}(v,w)\R_{a'' c'}^{a'\ c}(v,\nu)\R_{b''c''}^{b'\ c'}(w,\nu)
=\sum_{a',b',c'=1}^3 \R_{b'c'}^{b\ c}(w,\nu)\R_{a'c''}^{a\  c'}(v,\nu)\R_{a'' b''}^{a'\  b'}(v,w)\, .
\ee
Let us sketch how we can obtain the Hamiltonian (\ref{Hs}) (more precisely,
the component within the brackets) in the framework of ABA. First, we need to
introduce the $\Lo$-operators defined by
\be
\Lo_j(v,0)=\sum_{a,b,a_1,b_1=1}^3\R_{b\, b_1}^{aa_1}(v, 0)e_{ab}^{(0)}e_{a_1b_1}^{(j)}\, ,
\ \ \ \ \  \Lo_j(v,0)\in \mbox{End}\left((\mathbb{C}^3)^{\otimes(L+1)}\right)\, ,
\ee
where $e_{ab}^{(j)}$ is the canonical basis of operators acting on
$(\mathbb{C}^3)^{\otimes(L+1)}$, {\it i.e.},
$e_{ab}^{(0)}=e_{ab}\otimes\mathbb{I}_3^{\otimes L}\, $ and
$e_{ab}^{(j)}=\mathbb{I}_3\otimes \mathbb{I}_3^{\otimes (j-1)}\otimes
e_{ab}\otimes\mathbb{I}_3^{\otimes(L-j)}$ for $j=1,\cdots,L$. These operators
act on the tensorial product of $\mathbb{C}^3$, which is called the auxiliary
space, and the Hilbert space of the spin chain
$\mathfrak{H}=(\mathbb{C}^3)^{\otimes L}$. In this auxiliary space, using the
definition of the $\R$-matrix (\ref{rm}), the $\Lo$-operators can be
represented as
\be\label{defl}
\Lo_j(v,0)=\left(\begin{array}{ccc} \alpha_1(v,0)e_{11}^{(j)}+\beta(v,0)\left[e_{22}^{(j)}+e_{33}^{(j)}\right] & \g_-(v,0)e_{21}^{(j)} & \g_-(v,0)e_{31}^{(j)}\\
                                      \g_+(v,0)e_{12}^{(j)} & \alpha_2(v,0)e_{22}^{(j)}+\beta(v,0)\left[e_{11}^{(j)}+e_{33}^{(j)}\right] & \g_-(v,0) e_{32}^{(j)}\\
                                      \g_+(v,0)e_{13}^{(j)} & \g_+(v,0)e_{23}^{(j)} & \alpha_3(v,0)e_{33}^{(j)}+\beta(v,0)\left[e_{11}^{(j)}+e_{22}^{(j)}\right]

                    \end{array}
\right)
\ee
where we have introduced
\be
\alpha_i(v,w)=\frac{\sin[\g+\varepsilon_i(v-w)]}{\sin\g}\, ,\ \ \ \beta(v,w)=\frac{\sin(v-w)}{\sin \g}\, ,\ \ \
\g_{\pm}(v,w)=e^{\pm isgn(a-b)(v-w)}\, .
\ee
In (\ref{defl}), $e^{(j)}_{ab}$ now represents the canonical basis of operators acting on
$(\mathbb{C}^3)^{\otimes L}$ and, therefore, the elements of the matrix
represent operators acting on the Hilbert space of the spin chain. The
monodromy matrix is the ordered product of $\Lo$-operators
\be\label{int3}
\T(v)=\Lo_L(v,0)\Lo_{L-1}(v,0)\cdots\Lo_1(v,0)\, ,\ \ \ \ \ \
T(v)=\left(\begin{array}{lcr}
                                T_{11}(v) & T_{12}(v) & T_{13}(v) \\
                                T_{21}(v) & T_{22}(v) & T_{23}(v)\\
                                T_{31}(v) & T_{32}(v) & T_{33}(v)\\
              \end{array}\right)\, ,
\ee
and provides a representation of the Yang-Baxter algebra
\be\label{YBA}
\check\R(v,w)[\T(v)\otimes\T(w)]=[\T(v)\otimes\T(w)]\check\R(v,w)\, ,
\ee
where $\check\R_{b_1\, b_2}^{a_1a_2}(v,w)=(\Pe\R)_{b_1\,
  b_2}^{a_1a_2}(v,w)=\R_{b_1b_2}^{a_2a_1}(v,w)$.  In Eq.~(\ref{YBA}) the
tensor product should be understood as the tensor product of $3$-by-$3$
matrices with operator valued entries as presented in the r.h.s of
(\ref{int3}). Finally, the transfer matrix is defined as the trace of the
monodromy matrix in the auxiliary space
\be\label{transferm}
\tr(v)=\mbox{tr}_{0}\T(v)=T_{11}(v)+T_{22}(v)+T_{33}(v)\, .
\ee
Following \cite{F} we can show that
\be\label{int4}
\mathcal{H}_{PS}=J\varepsilon_1\sin\g\,  \tr^{-1}(0)\tr'(0)-
\sum_{j=1}^L\sum_{a=1}^3 h_a e_{aa}^{(j)}\, .
\ee
Obtaining the eigenvalues of the transfer matrix requires the existence of a
pseudovacuum on which the monodromy matrix acts triangularly. Using
(\ref{defl}), it is easy to see that
\be\label{vacuumt}
|\Omega\rangle=\underbrace{\left(\begin{array}{c} 1\\0\\0 \end{array}\right)\otimes\cdots
\otimes\left(\begin{array}{c} 1\\0\\0 \end{array}\right)}_{\mbox{ L times}}\, ,
\ee
satisfies the requirements of a pseudovacuum, and
\be
\T(v)|\Omega\rangle=\left(\begin{array}{ccc} T_{11}(v)|\Omega\rangle & T_{12}(v)|\Omega\rangle & T_{13}(v)|\Omega\rangle \\
                           T_{21}(v)|\Omega\rangle & T_{22}(v)|\Omega\rangle & T_{23}(v)|\Omega\rangle\\
                           T_{31}(v)|\Omega\rangle & T_{32}(v)|\Omega\rangle & T_{33}(v)|\Omega^\rangle\\

                       \end{array}\right)
                    =\left(\begin{array}{ccc} (\alpha_1(v,0))^L|\Omega\rangle & T_{12}(v)|\Omega\rangle & T_{13}(v)|\Omega\rangle \\
                                              0 & (\beta(v,0))^L|\Omega\rangle & 0\\
                                              0 &  0                  &(\beta(v,0))^L|\Omega\rangle
                    \end{array}\right)\, .
\ee
The energy spectrum of the Perk-Schultz spin chain (\ref{energyPS}) with the
$(---)$ grading is obtained using Eq.~(\ref{int4}) and the eigenvalues of the
transfer matrix (see Appendix \ref{ap11}):
\begin{align}
\tau(v)&=\left(\frac{\sin(\g-v)}{\sin\g}\right)^L\prod_{j=1}^M\frac{\sin(\g-v_j^{(1)}+v)}{\sin(v_j^{(1)}-v)}
+\left(\frac{\sin v}{\sin \g}\right)^L\prod_{j=1}^M\frac{\sin(\g-v+v_j^{(1)})}{\sin(v-v_j^{(1)})}\prod_{i=1}^{M_1}
\frac{\sin(\g-v_i^{(2)}+v)}{\sin(v_i^{(2)}-v)}\nonumber\\
&\ \ \ \ \ \ \ \ \ \ \ \ \ \ \ \ \ \ \ \ \ \ \ \ \ \ \ \ \ \ \ \ \ \ \ \  \ \
+\left(\frac{\sin v}{\sin \g}\right)^L\prod_{j=1}^{M_1}\frac{\sin(\g-v+v_j^{(2)})}{\sin(v-v_j^{(2)})}\, ,
\end{align}
with $\{v_j^{(1)}\}_{j=1}^M, \{v_i^{(2)}\}_{j=1}^{M_1}$ satisfying the BAEs
(\ref{BEs}), and the fact that the contribution of the chemical potential
component is given by $M(h_1-h_2)+M_1(h_2-h_3)+E_0$ \cite{dV}.

\subsection{ABA for the quantum transfer matrix}

The quantum transfer matrix \cite{MS,K1} is an algebraic object which plays a
significant role in the investigation of finite temperature properties of
integrable systems. The importance of the QTM resides in the fact that not
only the free energy can be derived from the largest eigenvalue but also that
different correlation lengths can be characterized as ratios of the leading
eigenvalues. Pedagogical introductions in the subject can be found in
\cite{K3,GKS1,GSuz}.

In order to introduce the QTM we need to define two types of $\Lo$-operators $\Lo_j(\la,-u)\, ,
\tilde\Lo_j(u,\la) \in\mbox{End}\left((\mathbb{C}^3)^{\otimes(N+1)}\right)$ as
\be\label{LQTM}
\Lo_j(\la,-u)=\sum_{a,b,a_1,b_1=1}^3\R_{b\,b_1}^{aa_1}
(\la,-u)e_{ab}^{(0)} e_{a_1b_1}^{(j)}\, ,\ \ \ \ \ \
\tilde\Lo_j(u,\la)=\sum_{a,b,a_1,b_1=1}^3\R_{a_1\, b}^{b_1\, a }
(u,\la)e_{ab}^{(0)} e_{a_1b_1}^{(j)}\, .
\ee
where $u=-J\sin(\varepsilon_1\gamma)\frac{\beta}{N},$ with $N\in 4\mathbb{N}$ the Trotter number, and
$e_{ab}^{(j)}$ is a canonical basis of operators acting on  $(\mathbb{C}^3)^{\otimes(N+1)}.$
In the auxiliary space the $\Lo$-operators  can be represented as
\begin{align}\label{lq}
&\Lo_j(v,-u)=\nonumber\\
&\left(\begin{array}{ccc} \alpha_1(v,-u)e_{11}^{(j)}+\beta(v,-u)\left[e_{22}^{(j)}+e_{33}^{(j)}\right] & \g_-(v,-u)e_{21}^{(j)} &
                                        \g_-(v,-u)e_{31}^{(j)}\\
                                      \g_+(v,-u)e_{12}^{(j)} & \alpha_2(v,-u)e_{22}^{(j)}+\beta(v,-u)\left[e_{11}^{(j)}+e_{33}^{(j)}\right] & \g_-(v,-u) e_{32}^{(j)}\\
                                      \g_+(v,-u)e_{13}^{(j)} & \g_+(v,-u)e_{23}^{(j)} & \alpha_3(v,-u)e_{33}^{(j)}+\beta(v,-u)\left[e_{11}^{(j)}+e_{22}^{(j)}\right]
                    \end{array}
\right)
\end{align}
\begin{align}\label{llq}
&\tilde\Lo_j(u,\ \, v)=\nonumber\\
&\left(\begin{array}{ccc} \ \ \alpha_1(u,\ \, v)e_{11}^{(j)}+\beta(u,\ \, v)\left[e_{22}^{(j)}+e_{33}^{(j)}\right] & \g_+(u,\ \, v)e_{12}^{(j)} &
                                        \g_+(u,\ \, v)e_{13}^{(j)}\\
                                      \g_-(u,\ \, v)e_{21}^{(j)} & \alpha_2(u,\ \, v)e_{22}^{(j)}+\beta(u,\ \, v)\left[e_{11}^{(j)}+e_{33}^{(j)}\right] & \g_+(u,\ v) e_{23}^{(j)}\\
                                      \g_-(u,\ v)e_{31}^{(j)} & \g_-(u,\ v)e_{32}^{(j)} & \alpha_3(u,\ v)e_{33}^{(j)}+\beta(u,\ v)\left[e_{11}^{(j)}+e_{22}^{(j)}\right]\ \
                    \end{array}
\right)
\end{align}
with $e^{(j)}_{ab}$ a canonical basis of operators acting on  $(\mathbb{C}^3)^{\otimes N}$.  The
monodromy matrix of the QTM, which provides another representation of the
Yang-Baxter algebra (\ref{YBA}), is given by
\be\label{mq}
\T^{QTM}(v)=\Lo_{N}(v,-u)\tilde\Lo_{N-1}(u,v)\cdots
\Lo_{2}(v,-u)\tilde\Lo_{1}(u,v)\, .
\ee
Using the representations (\ref{lq}) and (\ref{llq}) we can see that
\be\label{pseudoQTM}
|\Omega\rangle=\underbrace{\left(\begin{array}{c} 1\\
0\\0 \end{array}\right)\otimes\left(\begin{array}{c} 0\\0\\1
\end{array}\right)\otimes\cdots
\otimes\left(\begin{array}{c} 1\\0\\0 \end{array}\right)\otimes
\left(\begin{array}{c} 0\\0\\1 \end{array}\right)}_{\mbox{ N factors}}\, ,
\ee
is a pseudovacuum and
\begin{align*}
\T^{QTM}(v)|\Omega\rangle
&=\left(\begin{array}{ccc} T_{11}^{QTM}(v)|\Omega\rangle & T_{12}^{QTM}(v)|\Omega\rangle & T_{13}^{QTM}(v)|\Omega\rangle \\
                           T_{21}^{QTM}(v)|\Omega\rangle & T_{22}^{QTM}(v)|\Omega\rangle & T_{23}^{QTM}(v)|\Omega\rangle\\
                           T_{31}^{QTM}(v)|\Omega\rangle & T_{32}^{QTM}(v)|\Omega\rangle & T_{33}^{QTM}(v)|\Omega^\rangle\\
  \end{array}\right)\nonumber\\
&=\left(\begin{array}{ccc} (\alpha_1(v,-u)\beta(u,v))^{N/2}|\Omega\rangle & T_{12}^{QTM}(v)|\Omega\rangle & T_{13}^{QTM}(v)|\Omega\rangle \\
                              0 & (\beta(v,-u)\beta(u,v))^{N/2}|\Omega\rangle & T_{23}^{QTM}(v)|\Omega\rangle \\
                              0 &  0                  &(\beta(v,-u)\alpha_3(u,v))^{N/2}|\Omega\rangle
  \end{array}\right)\, .
\end{align*}
The presence of the chemical potential term in the Hamiltonian (\ref{Hs}) is
taken into account via the transformation
\begin{align*}
\T^{QTM}(v)\rightarrow
\T^{QTM}(v)
  \left(\begin{array}{ccc}  e^{\beta h_1} & 0 & 0 \\
                            0 & e^{\beta h_2} & 0\\
                            0 & 0 & e^{\beta h_3}\\
  \end{array}\right)\, .
\end{align*}
The QTM is defined as the trace in the auxiliary space of the monodromy matrix
\be\label{qtransferm}
\tr^{QTM}(v)=\mbox{tr}_0\T^{QTM}(v)=T_{11}^{QTM}(v)+T_{22}^{QTM}(v)+T_{33}^{QTM}(v).
\ee
The existence of the pseudovacuum and the Yang-Baxter algebra (\ref{YBA})
ensures that the eigenvalues of $\tr^{QTM}(v)$ for the grading $(---)$ can be
obtained using ABA with the result (see Appendix \ref{ap11}):
\[
\Lambda^{QTM}(v)=\lambda_1(v)+\lambda_2(v)+\lambda_3(v)\, ,
\]
where
\begin{align}\label{int5}
\lambda_1(v)&=e^{\beta h_3}\left(\frac{\sin(\g-v-u)}{\sin\g}\frac{\sin(u-v)}{\sin\g}\right)^{N/2}\prod_{j=1}^M\frac{\sin(\g-v_j^{(1)}+v)}{\sin(v_j^{(1)}-v)}\, ,\nonumber\\
\lambda_2(v)&
=e^{\beta h_1}\left(\frac{\sin (v+u)}{\sin \g}\frac{\sin(u-v)}{\sin\g}\right)^{N/2}\prod_{j=1}^M\frac{\sin(\g-v+v_j^{(1)})}{\sin(v-v_j^{(1)})}\prod_{i=1}^{M_1}
\frac{\sin(\g-v_i^{(2)}+v)}{\sin(v_i^{(2)}-v)}\, ,\\
\lambda_3(v)&
=e^{\beta h_2}\left(\frac{\sin (v+u)}{\sin \g}\frac{\sin(\g-u+v)}{\sin\g}\right)^{N/2}\prod_{j=1}^{M_1}\frac{\sin(\g-v+v_j^{(2)})}{\sin(v-v_j^{(2)})}\, ,\nonumber
\end{align}
with $\{v_j^{(1)}\}_{j=1}^M$, ($\{v_i^{(2)}\}_{i=1}^{M_1}$) solutions of the BAEs $\lam_1(v^{(1)}_j)/\lam_{2}(v^{(1)}_j)=-1\, ,$
($\lam_2(v^{(2)}_j)/\lam_{3}(v^{(2)}_j)=-1\,$).


\section{Free energy of the $(---)$ Perk-Schultz spin chain }\label{S4}

The free energy per lattice site of the Perk-Schultz spin chain is obtained
from the largest eigenvalue of the QTM, denoted by $\Lambda_0(v)$, via the
relation $f(h_1,h_2,h_3,\beta)=-\ln \Lambda_0(0)/\beta$. The largest eigenvalue lies in
the $(N/2,N/2)$-sector which means that $M,M_1=N/2$ in Eqs.~(\ref{int5}).
For our purposes it is useful to change the spectral parameter $v\rightarrow iv$.
Then, the expression for the largest eigenvalue can be written as ($N\in 4\mathbb{N}$)
\be\label{deflam}
\Lambda_0(v)=\lam_1(v)+\lam_2(v)+\lam_3(v)\, ,\ \ \ \lam_j(v)=\phi_-(v)\phi_+(v)\frac{q_{j-1}(v+i\g)}{q_{j-1}(v)}
\frac{q_{j}(v-i\g)}{q_{j}(v)}e^{\beta \tilde h_j}\, ,
\ee
where $ $ $(\tilde h_1, \tilde h_2, \tilde h_3)= (h_3,h_1,h_2)$ and
\be
\phi_{\pm}(v)=\left(\frac{\sinh(v\pm iu)}{\sin \g}\right)^{N/2}\, , \ \ \ \ \ \ \ \ \
q_j(v)=\left\{\begin{array}{lr}
               \phi_-(v) & j=0\, \\
               \prod_{r=1}^{N/2}\sinh(v-v_{r}^{(j)})& j=1,2\, \\
               \phi_+(v) & j=3\, .\\
              \end{array}\right.
\ee
Using these notations the Bethe equations can be written as
$\lam_j(v^{(j)}_r)/\lam_{j+1}(v^{(j)}_r)=-1\, ,\  r=1,\cdots,N/2\, . $

\subsection{Nonlinear integral equations for the auxiliary functions}

We want to obtain an integral expression for the largest eigenvalue of the QTM
which can be easily implemented numerically. In this paper we are going to
use a method which can be understood as the multicomponent generalization of
the technique presented in \cite{K3} (other thermodynamic descriptions of the
Perk-Schultz spin chain can be found in \cite{PZinn,KWZ,TT1,Tsub1}). We will
introduce a set of two auxiliary functions for which the position of zeroes
and poles is known, allowing the derivation of NLIEs satisfied by these
functions using the Cauchy theorem. In the last step we are going to obtain an
integral expression for the largest eigenvalue in terms of the auxiliary
functions.

Define
\begin{subequations}\label{defaux}
\begin{align}
\au(v)&=\frac{\lam_1(v)}{\lam_2(v)}=\frac{\phi_-(v+i\g)}{\phi_-(v)}\frac{q_1(v-i\g)}{q_1(v+i\g)}\frac{q_2(v)}{q_2(v-i\g)}e^{\beta(h_3-h_1)}\, , \\
\ad(v)&=\frac{\lam_3(v)}{\lam_2(v)}=\frac{\phi_+(v-i\g)}{\phi_+(v)}\frac{q_1(v)}{q_1(v+i\g)}\frac{q_2(v+i\g)}{q_2(v-i\g)}e^{\beta(h_2-h_1)}\, ,
\end{align}
\end{subequations}
with $\mathfrak{a}_j(v)$ periodic of period $i\pi$. Each of the equations
$\mathfrak{a}_j(v)=-1$ has $3N/2$ solutions, of which $N/2$ are the Bethe
roots $\{v^{(j)}_r\}_{r=1}^{N/2}$ and $N$ solutions which are called holes and
will be denoted by $ \{v'^{(j)}_r\}_{r=1}^N$. The distribution of Bethe roots
and holes characterizing the largest eigenvalue of the QTM for
$\g\in(0,\pi/2)$ is presented in Fig.~\ref{LEfig}.
\begin{figure}
\includegraphics[width=0.5\linewidth]{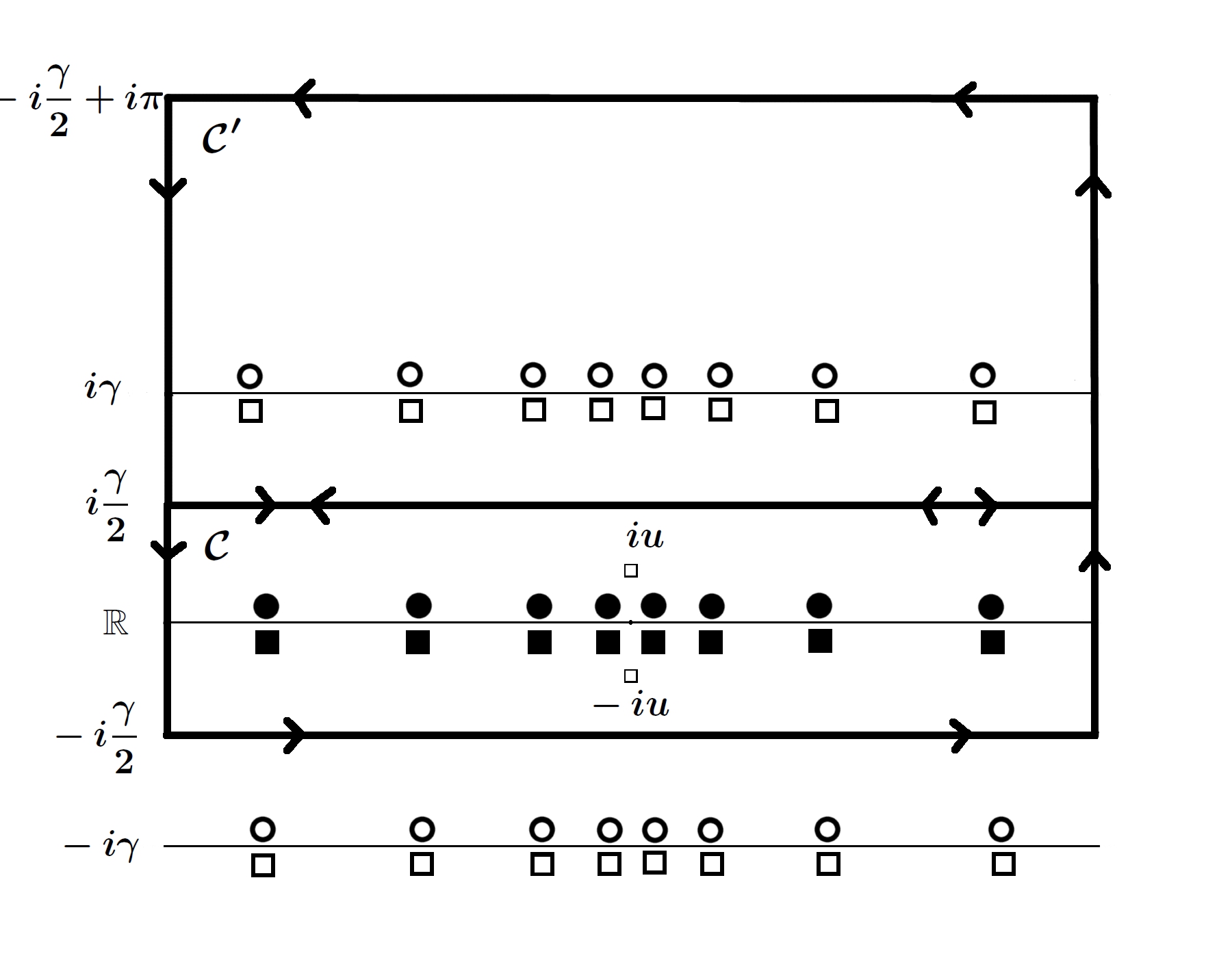}
\caption{Distribution of Bethe roots $\{v^{(1)}_i\}\, ,\{v^{(2)}_j\}$
  ($\bullet,\blacksquare$) and holes $\{v'^{(1)}_i\}\, ,\{v'^{(2)}_j\}$
  ($\circ,\square$) characterizing the largest eigenvalue of the QTM
  ($\g\in(0,\pi/2)$). The contour $\mathcal{C}$ contains all the Bethe roots
  and the poles of the auxiliary functions $\mathfrak{a}_j(v)\, $ at $\pm
  iu$. The lower edge of the contour $\mathcal{C}'$ (see Appendix \ref{ap22}
  and next section) coincides with the upper edge of $\mathcal{C}$ but has
  opposite orientation.}
\label{LEfig}
\end{figure}
Let $\mathcal{C}$ be a rectangular contour with the upper (lower) edges
parallel with the real axis through $\pm i(\g-\varepsilon)/2,$ with
$\varepsilon\rightarrow 0$ (please note that the contour is independent of
$N$, the Trotter number).  Inside the contour $\mathcal{C}$ the function
$1+\au(v)\, (1+\ad(v))$ has $N/2$ zeroes at the Bethe roots
$\{v^{(1)}_r\}_{r=1}^{N/2}\, (\{v^{(2)}_r\}_{r=1}^{N/2})$ and a pole of order
$N/2$ at $ iu\, (-iu)$.  Then, for $v$ outside of the contour, and $j=\{1,2\}$
we can introduce
\be\label{f}
f_j(v)\equiv\frac{1}{2\pi i}\int_{\mathcal{C}}\frac{d}{dv}\left(\ln\sinh(v-w)\right)
\ln(1+\mathfrak{a}_j(w))dw=\frac{1}{2\pi i}\int_{\mathcal{C}}\ln\sinh(v-w)
\frac{\mathfrak{a}_j'(w)}{1+\mathfrak{a}_j(w)}dw\, ,
\ee
where the r.h.s of (\ref{f}) was obtained by taking into consideration that
$\ln(1+\mathfrak{a}_j(v))$ has no winding number due to the fact that the
number of zeroes inside the contour is equal to the order of the poles at
$\pm iu$. We can evaluate $f_j(v)$ making use of the following theorem:
\begin{thm}\label{T}\cite{WW}
Let $g(\la)$ be an analytic function inside and on an arbitrary contour
$\mathcal{C}$ in the complex plane. Let $\phi(\la)$ be another function which
is meromorphic inside and on the contour. Denoting the zeros of $\phi(\la)$ in
the interior of $\mathcal{C}$ by $a_1,a_2,\cdots$ (with multiplicities
$r_1,r_2,\cdots$ ) and the poles by $b_1,b_2,\cdots$ (with multiplicities
$s_1,s_2,\cdots$ ), then
\[
\frac{1}{2\pi i}\int_{\mathcal{C}}g(\la)\frac{\phi'(\la)}{\phi(\la)}d\la=
\sum_{i\in\mbox{zeros}}r_ig(a_i)
-\sum_{i\in\mbox{poles}}s_ig(b_i)\, ,
\]
\end{thm}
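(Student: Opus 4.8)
The statement is the classical weighted form of the argument principle, and the plan is to prove it directly as an application of the residue theorem. The key observation is that, since $g(\la)$ is analytic inside and on $\mathcal{C}$, the only singularities of the integrand $g(\la)\,\phi'(\la)/\phi(\la)$ interior to $\mathcal{C}$ are those of the logarithmic derivative $\phi'(\la)/\phi(\la)$, and these occur precisely at the zeros and poles of $\phi$. Implicit in the hypotheses is that $\phi$ has neither zeros nor poles on $\mathcal{C}$ itself, so that the contour integral is well defined; granting this, the result follows by evaluating the residue at each interior singularity and summing.

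First I would carry out the local analysis at a zero $a_i$ of multiplicity $r_i$. Writing $\phi(\la)=(\la-a_i)^{r_i}h(\la)$ with $h$ analytic and nonvanishing near $a_i$, the logarithmic derivative is
\[
\frac{\phi'(\la)}{\phi(\la)}=\frac{r_i}{\la-a_i}+\frac{h'(\la)}{h(\la)}\, ,
\]
whose second term is regular at $a_i$; since $g$ is analytic there, the integrand has a simple pole at $a_i$ with residue $r_i\,g(a_i)$. Then I would repeat this at a pole $b_i$ of multiplicity $s_i$: writing $\phi(\la)=(\la-b_i)^{-s_i}h(\la)$ with $h$ analytic and nonvanishing near $b_i$, the logarithmic derivative acquires the singular part $-s_i/(\la-b_i)$, so the integrand has a simple pole at $b_i$ with residue $-s_i\,g(b_i)$.

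Finally, applying the residue theorem to $g\,\phi'/\phi$ along $\mathcal{C}$ and collecting the contributions computed above yields the asserted identity, the zeros entering with a plus sign and weight $r_i$ and the poles with a minus sign and weight $s_i$. There is no genuine obstacle here, as the argument is entirely local and standard; the only point demanding care is the correct bookkeeping of the signs and multiplicities in the residues, together with the tacit requirement that all zeros and poles of $\phi$ lie strictly in the interior of $\mathcal{C}$ so that the residue theorem applies.
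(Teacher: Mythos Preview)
Your proof is correct and is precisely the standard argument: local factorization at each zero or pole, computation of the residue of the logarithmic derivative, and application of the residue theorem. The paper does not supply its own proof of this theorem; it simply quotes the result from Whittaker and Watson \cite{WW} and uses it as a tool, so there is nothing further to compare.
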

with the result
\begin{subequations}\label{int6}
\begin{align}
f_1(v)&=\ln q_1(v)-\ln \phi_-(v)-\frac{N}{2}\ln\sin\g\, ,\\
f_2(v)&=\ln q_2(v)-\ln \phi_+(v)-\frac{N}{2}\ln\sin\g\, .
\end{align}
\end{subequations}
Taking the logarithm of (\ref{defaux}) and using (\ref{int6}) we find
\begin{subequations}\label{int7}
\begin{align}
\ln \au(v)=\beta(h_3-h_1)+\ln\left(\frac{\phi_-(v-i\g)}{\phi_+(v-i\g)}\frac{\phi_+(v)}{\phi_-(v)}\right)
+f_1(v-i\g)-f_1(v+i\g)+f_2(v)-f_2(v-i\g)\, ,\\
\ln \ad(v)=\beta(h_2-h_1)+\ln\left(\frac{\phi_+(v+i\g)}{\phi_-(v+i\g)}\frac{\phi_-(v)}{\phi_+(v)}\right)
+f_1(v)-f_1(v+i\g)+f_2(v+i\g)-f_2(v-i\g)\, ,
\end{align}
\end{subequations}
Eqs.~(\ref{int7}) are nonlinear integral equations of convolution type for
$\au(v)$ and $\ad(v)$ which are valid for all $N$. Using the identity
\[
\lim_{N\rightarrow\infty}\ln\left(\frac{\phi_+(v)}{\phi_-(v)}\right)=i\beta J\sin\g\coth v\, ,
\]
we can perform the Trotter limit, $N\rightarrow \infty$, obtaining
\begin{subequations}\label{nlie}
\begin{align}
\ln\au(v)&=\beta(h_3-h_1)-\beta \frac{J\sinh^2(i\g)}{\sinh v\sinh(v-i\g)}+\int_{\mathcal{C}}K_0(v-w)\ln(1+\au(w))\, dw
-\int_{\mathcal{C}}K_2(v-w)\ln(1+\ad(w))\, dw,\label{nlie1}\\
\ln\ad(v)&=\beta(h_2-h_1)-\beta \frac{J\sinh^2(i\g)}{\sinh v\sinh(v+i\g)}+\int_{\mathcal{C}}K_1(v-w)\ln(1+\au(w))\, dw
-\int_{\mathcal{C}}K_0(v-w)\ln(1+\ad(w))\, dw,\label{nlie2}
\end{align}
\end{subequations}
where
\be
K_0(v)=\frac{1}{2\pi i}\frac{\sinh(2i\g)}{\sinh(v+i\g)\sinh(v-i\g)}\, ,\ \
K_1(v)=\frac{1}{2\pi i}\frac{\sinh(i\g)}{\sinh(v+i\g)\sinh v}\, ,\ \
K_2(v)=\frac{1}{2\pi i}\frac{\sinh(i\g)}{\sinh(v-i\g)\sinh v}\, .
\ee
Even though Eqs.~(\ref{nlie}) were derived assuming $\g\in(0,\pi/2)$ they
remain valid also for $\g\in(\pi/2,\pi)$ by replacing $\mathcal{C}$ with a
similar rectangular contour with the upper (lower) edges situated at $\pm
i(\pi-\g-\varepsilon)/2$ with $\epsilon\rightarrow 0$.  An observation is in
order. The above results were derived assuming that $v$ and $v\pm i\g$ were
situated outside the contour $\mathcal{C}$. For $v$ on the real axis the
integrands that appear in the definition of the functions $f_j(v)$ present an
additional pole at $v$ which means that in this case we have to add
$\ln(1+\ad(v))$ on the r.h.s of Eq.~(\ref{nlie1}) and $\ln(1+\au(v))$ on the
r.h.s of Eq.~(\ref{nlie2}).

\subsection{Integral expression for the largest eigenvalue}\label{IntegralE}

The free energy of the spin chain is obtained from the largest eigenvalue of
the QTM evaluated at zero. Taking into account that $\Lambda_0(v)$ is
analytical in a strip around the real axis then is sufficient to obtain an
integral representation for $\Lambda_0(v_0)$ with $v_0$ inside the strip and
then take the limit $v_0\rightarrow 0.$ Choosing $v_0=-iu$, for which
$\phi_+(v_0)=0$, and using (\ref{i1}) we find
\be\label{i2}
\Lambda_0(v_0)=c\ \frac{\phi_-(v_0)q_2^{(h)}(v_0)}{q_1(v_0)}\, .
\ee
Taking the logarithm of (\ref{i2}) and using (\ref{zz1}) we obtain (all
constants are denoted by $c$)
\begin{align}\label{i4}
\ln\Lambda_0(v_0)&=-\ln q_1(v_0)+\ln\phi_-(v_0))+\ln q_2^{(h)}(v_0)+c\, ,\nonumber\\
                  &=\ln q_1^{(h)}(v_0)+\ln q_2^{(h)}(v_0)-\ln q_1(v_0+i\g)-\ln q_2(v_0-i\g)-\ln(1+\mathfrak{a}_1(v_0))+c\, .
\end{align}
Now we need integral representations for $\ln q_1^{(h)}(v_0)$ and $\ln
q_2^{(h)}(v_0)$ with $v_0$ close to the real axis.  For $v$ inside the contour
$\mathcal{C}$ consider
\be
\frac{1}{2\pi i}\int_{\mathcal{C}}d(v-w)\frac{\mathfrak{a}_1'(w)}{1+\mathfrak{a}_1(w)}dw
=-\frac{1}{2\pi i}\int_{\mathcal{C'}}d(v-w)\frac{\mathfrak{a}_1'(w)}{1+\mathfrak{a}_1(w)}dw\, ,
\ee
where on the r.h.s we have used (\ref{integralr}). The latter expression can
be evaluated with the use of Theorem \ref{T} and using that
$1+\mathfrak{a}_1(v)$ is periodic of period $i\pi$ and inside the contour
$\mathcal{C'}$ (see Appendix \ref{ap22}) has $N$ zeros which are holes
$\{v'^{(1)}_j\}_{j=1}^N$ and $N$ poles situated at
$\{v_j^{(1)}-i\g\}_{j=1}^{N/2}$ and $\{v_j^{(2)}+i\g\}_{j=1}^{N/2}$ (some
modulo $i\pi$). We find
\be\label{i3}
\frac{1}{2\pi i}\int_{\mathcal{C}}d(v-w)\frac{\mathfrak{a}_1'(w)}{1+\mathfrak{a}_1(w)}dw=
-\sum_{j=1}^Nd(v-v'^{(1)}_j)+\sum_{j=1}^{N/2}d(v-v_j^{(1)}+i\g)+\sum_{j=1}^{N/2}d(v-v_j^{(2)}-i\g)\, .
\ee
%
%
%
For $v$ inside the contour $\mathcal{C},$  $v+i\g$ is situated outside the contour. Then
\be\label{int10}
\frac{1}{2\pi i}\int_{\mathcal{C}}d(v-w+i\g)\frac{\mathfrak{a}_1'(w)}{1+\mathfrak{a}_1(w)}dw=
\sum_{j=1}^{N/2}d(v-v_j^{(1)}+i\g)-Nd(v-iu+i\g)/2\, ,
\ee
where we have used Theorem \ref{T} and the fact that inside the contour the
function $1+\au(v)$ has $N/2$ zeros at the Bethe roots
$\{v_j^{(1)}\}_{j=1}^{N/2}$ and a pole of order $N/2$ at $iu$.  Taking the
difference of (\ref{i3}) and (\ref{int10}), integrating by parts w.r.t $w$ and
then integrating everything w.r.t. $v$ we obtain
\begin{align}\label{int11}
\frac{1}{2\pi i}\int_{\mathcal{C}}[d(v-w)-d(v-w+i\g)]\ln(1+\au(w))\, dw&=\ln q_2(v-i\g)-\ln q_1^{(h)}(v)+\ln\phi_-(v+i\g)+c\, .
\end{align}
In an analogous fashion we find
\begin{align}\label{int12}
\frac{1}{2\pi i}\int_{\mathcal{C}}[d(v-w)-d(v-w-i\g)]\ln(1+\ad(w))\, dw&=\ln q_1(v+i\g)-\ln q_2^{(h)}(v)+\ln\phi_+(v-i\g)+c\, .
\end{align}
Replacing $\ln q_1^{(h)}(v_0)$ and $\ln q_2^{(h)}(v_0)$ appearing in the second identity of (\ref{i4})  with
the expressions which can be derived from (\ref{int11}) and (\ref{int12}) we find
\begin{align}\label{int13}
\ln\Lambda_0(v_0)=\ln(\phi_+(v_0-i\g)\phi_-(v_0+i\g))-&\int_{\mathcal{C}}K_1(v_0-w)\ln(1+\au(w))\, dw\nonumber\\
&+\int_{\mathcal{C}}K_2(v_0-w)\ln(1+\ad(w))\, dw-\ln(1+\mathfrak{a}_1(v_0))+c\, .
\end{align}
Eq.~(\ref{int13}) is in fact valid for $v_0$ in a narrow strip around the real
axis as a consequence of the analyticity of the largest eigenvalue. Obtaining
the constant of integration requires the introduction of
\be\label{i5}
\bar\Lambda_0(v)=\frac{\Lambda_0(v)}{\phi_+(v-i\g)\phi_-(v+i\g)}\, ,
\ee
which has the property of having constant asymptotics at infinity
$\lim_{v\rightarrow \infty}\bar\Lambda_0(v)=e^{\beta h_1}+e^{\beta
  h_2}+e^{\beta h_3}$.  All the previous considerations in this section are
also valid for $\bar\Lambda_0(v)$ which means that
\begin{align}\label{int14}
\ln\bar\Lambda_0(v)&=c-\int_{\mathcal{C}}K_1(v-w)\ln(1+\au(w))\, dw
+\int_{\mathcal{C}}K_2(v-w)\ln(1+\ad(w))\, dw-\ln(1+\mathfrak{a}_1(v))\, ,
\end{align}
with $c$ a constant and $v$ in a strip around the real axis.  Taking the limit
$v\rightarrow\infty$ and using $\lim_{v\rightarrow \infty}
\au(v)=e^{\beta(h_3-h_1)}$ and $\lim_{v\rightarrow \infty}
\ad(v)=e^{\beta(h_2-h_1)}$ we find that
\be\label{c}
c=\ln\left[\frac{e^{\beta h_1}+e^{\beta h_2}+e^{\beta h_3}}{1+e^{\beta(h_2-h_1})}\right]
=\beta h_1+\ln\left[\frac{1+e^{\beta (h_2-h_1)}+e^{\beta (h_3-h_1)}}{1+e^{\beta(h_2-h_1})}\right]\, ,
\ee
The last expression for the constant shows that in the scaling limit where
$\beta=\bar \beta/\delta^2$ with $\delta\rightarrow O(\epsilon^2)$ and
$h_1\rightarrow O(\epsilon^2),\ h_{2,3}\rightarrow O(\epsilon^4)$ the term
containing the square parenthesis vanishes.  Finally, from (\ref{i5}) and
using $\lim_{\substack{v\rightarrow 0,}{N\rightarrow
    \infty}}\ln(\phi_+(v-i\g)\phi_-(v+i\g))=-J\beta\cos\g$ we obtain the
integral expression for the largest eigenvalue
\be\label{le}
\ln\Lambda_0(0)=c-J\beta\cos\g-\int_{\mathcal{C}}K_2(w)\ln(1+\au(w))\, dw+\int_{\mathcal{C}}K_1(w)\ln(1+\ad(w))\, dw-\ln(1+\mathfrak{a}_1(0))\, ,
\ee
with $c$ defined in (\ref{c}). This expression, which was derived assuming
$\g\in(0,\pi/2)$, remains valid also for $\g\in(\pi/2,\pi)$ by replacing
$\mathcal{C}$ with a similar rectangular contour with the upper (lower) edges
situated at $\pm i(\pi-\g-\varepsilon)/2$.

\section{Continuum limit}\label{S5}

Having derived an integral expression for the free energy of the spin chain
all that remains in order to obtain the thermodynamic description of the 2CBG
is to perform the continuum limit presented in Section \ref{S2}. In the
continuum limit we can see from Table \ref{table1} that $\g=\pi-\varepsilon$
which means that the contour $\mathcal{C}$ appearing in Eqs.~(\ref{nlie}) and
(\ref{le}) has the upper (lower) edge parallel with the real axis situated at
$\pm i(\pi-\g-\varepsilon)/2.$ We will denote the upper edge of the contour
situated at $i(\pi-\g)/2$ (the $\varepsilon$ term is irrelevant for the
following discussion) by $\mathcal{C}_+$ and by $\mathcal{C}_-$ the lower edge
situated at $-i(\pi-\g)/2$.  For $v\in\mathcal{C}_-\, ,$ $v=x-i(\pi-\g)/2$
with $x$ real, the driving term in the r.h.s. of (\ref{nlie1}) is negative and
equal with
\[
\beta(h_3-h_1)- \beta\frac{J\sin^2\g}{\cosh(x+i\g/2)\cosh(x-i\g/2)}\, ,\ \ \ \ h_3<h_1\, ,J>0\, ,
\]
which means that in the scaling limit which implies $\beta\rightarrow\infty\,
,$ the auxiliary function $\au(v)$ is very small on $\mathcal{C}_-$ and it can
be neglected (for a rigorous justification see \cite{DGK}). In a similar
fashion, for $v\in\mathcal{C}_+\, ,$ $v=x+i(\pi-\g)/2$ with $x$ real, the
driving term in the r.h.s. of (\ref{nlie2}) is negative and equal with
\[
\beta(h_2-h_1)- \beta\frac{J\sin^2\g}{\cosh(x+i\g/2)\cosh(x-i\g/2)}\, ,\ \ \ \ h_2<h_1\, ,J>0\, ,
\]
which means that the contribution of the auxiliary function $\ad(v)$  on $\mathcal{C}_+$
is also negligible. Therefore, in the scaling limit Eqs.~(\ref{nlie}) take the form
\begin{subequations}
\begin{align}
\ln\au(v)&=\beta(h_3-h_1)-\beta \frac{J\sinh^2(i\g)}{\sinh v\sinh(v-i\g)}+\int_{\mathcal{C}_+}K_0(v-w)\ln(1+\au(w))\, dw
-\int_{\mathcal{C}_-}K_2(v-w)\ln(1+\ad(w))\, dw,\\
\ln\ad(v)&=\beta(h_2-h_1)-\beta \frac{J\sinh^2(i\g)}{\sinh v\sinh(v+i\g)}+\int_{\mathcal{C}_+}K_1(v-w)\ln(1+\au(w))\, dw
-\int_{\mathcal{C}_-}K_0(v-w)\ln(1+\ad(w))\, dw.
\end{align}
\end{subequations}
We can shift the free argument $v$ and variable of integration to the line
$+i\g/2$ for the function $\au(v)$ and to the line $-i\g/2$ for the function
$\ad(v)$ without crossing any poles of the driving terms obtaining
\begin{subequations}\label{int15}
\begin{align}
\ln\au(v+i\g/2)=\beta(h_3-h_1)&-\beta \frac{J\sinh^2(i\g)}{\sinh (v+i\g/2)\sinh(v-i\g/2)}-\int_{\mathbb{R}}K_0(v-w)\ln(1+\au(w+i\g/2))\, dw\nonumber\\
&\ \ \ \ \ \ \ \ \ \ \ \ \ \  \ \ \ \ \ \ \  \ \ \ \ \ \ \ \ -\int_{\mathbb{R}}K_2(v-w+i\g-i\varepsilon)\ln(1+\ad(w-i\g/2))\, dw,\\
\ln\ad(v-i\g/2)=\beta(h_2-h_1)&-\beta \frac{J\sinh^2(i\g)}{\sinh (v+i\g/2)\sinh(v-i\g/2)}-\int_{\mathbb{R}}K_1(v-w-i\g+i\varepsilon)\ln(1+\au(w+i\g/2))\, dw\nonumber\\
&\ \ \ \ \  \ \ \ \ \ \ \ \ \ \ \ \ \ \ \ \ \ \ \ \ \ \ \ \ \ -\int_{\mathbb{R}}K_0(v-w)\ln(1+\ad(w-i\g/2))\, dw.
\end{align}
\end{subequations}
where we took into account the negative orientation of $C_+$.  Performing
similar transformations in the integral expression for the largest eigenvalue
(\ref{le}) we find
\be
\ln\Lambda_0(0)=c-J\beta\cos\g+\int_{\mathbb{R}}K_2(w+i\g/2)\ln(1+\au(w+i\g/2))\, dw+\int_{\mathbb{R}}K_1(w-i\g/2)\ln(1+\ad(w-i\g/2))\, dw-\ln(1+\mathfrak{a}_1(0))\, .
\ee

In the continuum limit we have $\g=\pi-\epsilon$, $v\rightarrow \delta
k/\epsilon,$ $ w\rightarrow \delta k'/\epsilon$ (we do not need the $i$ factor
because we have already considered the largest eigenvalue at $iv$, see the
remark before (\ref{deflam})) and
\be
K_0(v)\rightarrow -\frac{\epsilon}{\delta}\frac{1}{2\pi} \frac{2c}{k^2+c^2}\, ,\ \
K_1(v-i\gamma)\rightarrow -\frac{\epsilon}{\delta}\frac{1}{2\pi }\frac{c}{k(k+ic)}\, ,\ \
K_2(v+i\gamma)\rightarrow -\frac{\epsilon}{\delta}\frac{1}{2\pi }\frac{c}{k(k-ic)}\, .
\ee
Introducing $a_1(k)=\au(\delta k/\epsilon+i\g/2),$ $a_2(k)=\ad(\delta
k/\epsilon-i\g/2)$ and taking the scaling limit in Eqs.~(\ref{int15}) we
obtain the NLIEs for the two-component Bose gas, Eqs.~(\ref{NLIEgas}).

The grandcanonical potential per unit length for the Bose gas is, see
Eq.~(\ref{rel1}), $\phi(\mu,H,\bar
\beta)=(f(h_1,h_2,h_3,\beta)-E_0/L)/\delta^3$ with
$f(h_1,h_2,h_3,\beta)=-\ln\Lambda_0(0)/\beta$ the free energy of the spin
chain per lattice site and $E_0/L=J\cos\g-h_1$ the zero point energy. In the
continuum limit the real part of the driving term on the r.h.s of
Eq.~(\ref{nlie1}) becomes large and negative like $\mathcal{O}(1/\epsilon^2)$
which means that $\ln(1+\au(0))/\delta$ vanishes in this limit.  Noticing that
$K_1(\delta k/\epsilon -i\g/2)= K_2(\delta k/\epsilon +i\g/2)\sim\epsilon/
2\pi$ with $dw= \delta dk/\epsilon$ and that the second term in the r.h.s of
(\ref{c}) vanishes in the same limit we obtain Eq.~(\ref{GCPgas}) for the
grandcanonical potential.

\section{Conclusions}

In this article we have investigated the density profiles and local density
correlation functions of the inhomogeneous 2CBG at finite temperature. Our
results derived using a new and numerically efficient solution for the
thermodynamics of the uniform system and the local density approximation
predict that the polarization at the center of the trap has a significant
influence on the local correlator throughout the sample this effect becoming
more pronounced at strong-coupling.  Even though our analysis of the density
profiles did not find any evidence of phase separation this scenario cannot be
fully excluded in the region of low-temperatures and small polarizations where our
equations (and also the TBA result) become numerically unstable as a result of
the first-order phase transition. The method employed in the derivation of the
thermodynamics of the uniform system can also be used in the case of
two-component fermions. In this case the relevant lattice model is the $(-++)$
Perk-Schultz spin chain, and will be the subject of a future publication
\cite{PK1}.

\section{Acknowledgments}

Financial support from the VolkswagenStiftung and the PNII-RU-TE-2012-3-0196
grant of the Romanian National Authority for Scientific Research is gratefully
acknowledged.

\appendix

\section{ABA solution of the generalized $q=3$ Perk-Schultz model}\label{ap11}

In this appendix we present the solution of the generalized $q=3$ Perk-Schultz
model.  The eigenvalues of the transfer matrix (\ref{transferm}) and quantum
transfer matrix (\ref{qtransferm}) can be obtained as particular cases of this
general solution. We should mention that the transfer matrix (\ref{transferm})
was diagonalized in \cite{BVV} and the eigenvalues of the quantum transfer
matrix (\ref{qtransferm}) were conjectured in \cite{KWZ}. For a proof of the
results presented below see \cite{Sm,G1,GS,ACDFR,BR}.

The generalized $q=3$ Perk-Schultz model is the set of all linear
representations of the Yang-Baxter algebra
\be
\check\R(v,w)[\T(v)\otimes\T(w)]=[\T(v)\otimes\T(w)]\check\R(v,w)\, ,
\ee
with the $\R$-matrix defined in (\ref{rm}) and $\T(v)$ the monodromy matrix
which acts triangularly on a pseudovacuum (highest vector) $|\Omega\rangle$
\be
\T(v)|\Omega\rangle=\left(\begin{array}{ccc} A(v)|\Omega\rangle & B_1(v)|\Omega\rangle & B_2(v)|\Omega\rangle \\
                                             C_1(v)|\Omega\rangle & D_{11}(v)|\Omega\rangle & D_{12}(v)|\Omega\rangle\\
                                             C_2(v)|\Omega\rangle & D_{21}(v)|\Omega\rangle & D_{22}(v)|\Omega\rangle\\
                       \end{array}\right)
                    =\left(\begin{array}{ccr} \pu(v)|\Omega\rangle & B_1(v)|\Omega\rangle & B_2(v)|\Omega\rangle \\
                                              0 & \pd(v)|\Omega\rangle & D_{12}(v)|\Omega\rangle\\
                                              0 &  0                  &\pt(v)|\Omega\rangle
                    \end{array}\right)\, .
\ee
The functions $\varphi_j(v)\, ,j=1,2,3$ are called the parameters of the
model. The monodromy matrices of the transfer matrix (\ref{int3}) and QTM
(\ref{mq}) satisfy these requirements with pseudovacua (\ref{vacuumt}) and
(\ref{pseudoQTM}) respectively. The parameters of the model are given by
\be
\pu(v)=(\alpha_1(v,0))^L\, ,\ \ \pd(v)=(\beta(v,0))^L\, ,\ \ \pt(v)=(\beta(v,0))^L\, ,
\ee
in the transfer matrix case and by
\be\label{pqtm}
\pu(v)=e^{\beta h_1}(\alpha_1(v,-u)\beta(u,v))^{N/2}\, ,\ \pd(v)=e^{\beta h_2}(\beta(v,-u)\beta(u,v))^{N/2}\, ,
\ \pt(v)=e^{\beta h_3} (\beta(v,-u)\alpha_3(u,v))^{N/2}\, ,
\ee
in the QTM case.  The eigenvalues of the generalized model are
\cite{Sm,ACDFR,BR}
\be
\Lambda(v)=\pu(\la)\prod_{j=1}^n g_1(\lu_j,v)+\pd(\la)\prod_{i=1}^n g_2(v,\lu_i)\prod_{j=1}^m g_2(\ld_j,v)
+\pt(v)\prod_{j=1}^m g_{3}(v,\ld_j)\, ,
\ee
with $\{\lu_i\}_{i=1}^n\, , \{\ld_j\}_{j=1}^m$ satisfying the BAEs ($g_i(v,w)=\alpha_i(v,w)/\beta(v,w)$)
\begin{align}
&\frac{\pu(\lu_k)}{\pd(\lu_k)}=\prod_{\substack{i=1\\i\ne k}}^n\frac{g_2(\lu_k,\lu_i)}{g_1(\lu_i,\lu_k)}\prod_{j=1}^m g_2(\ld_j,\lu_k)\, ,\ \ \
k=1,\cdots,n\, ,\\
&\frac{\pd(\ld_l)}{\pt(\ld_l)}=\prod_{i=1}^ng^{-1}_2(\ld_l,\lu_i)\prod_{\substack{j=1\\j\ne l}}^m\frac{g_3(\ld_l,\ld_j)}{g_2(\ld_j,\ld_l)}\, ,
\ \ \ \  l=1,\cdots, m\, .
\end{align}
In the QTM case, it is preferable to work with a different pseudovacuum (see
\cite{Sm}) which amounts to cyclic permutations
$(h_1,h_2,h_3)\rightarrow(h_3,h_1,h_2),$
$(\varepsilon_1,\varepsilon_2,\varepsilon_3),
\rightarrow(\varepsilon_3,\varepsilon_1,\varepsilon_2)$ of the chemical
potentials and grading in (\ref{pqtm}).  This explains the order of the
chemical potentials in Eqs.~(\ref{int5}).

\section{Proof of some identities}\label{ap22}

In this appendix we prove several identities which we use in Section
\ref{IntegralE}.  First, we will show that
\be\label{i1}
\lam_2(v)+\lam_3(v)=c\  \frac{\phi_-(v)q_2^{(h)}(v)}{q_1(v)}\, ,\ \ \ q_2^{(h)}(v)=\prod_{j=1}^N\sinh(v-v'^{(2)}_j)\, ,
\ee
with $c$ a constant and $\{v'^{(2)}_j\}_{j=1}^N$ the holes corresponding to the equation $\ad(v)=-1$.
Using (\ref{deflam}) we have
\be\label{int8}
\lam_2(v)+\lam_3(v)=\frac{\phi_-(v)}{q_1(v)}\frac{p_2(v)}{q_2(v)}\, ,
\ee
with $p_2(v)=\phi_+(v)q_1(v+i\g)q_2(v-i\g)e^{\beta
  h_1}+\phi_+(v-i\g)q_1(v)q_2(v+i\g)e^{\beta h_2}$. The function $p_2(v)$ has
the following properties: $p_2(v+i\pi)=(-1)^{3N/2}p_2(v)$ and
$\lim_{v\rightarrow \infty}p_2(v)/(\sinh v)^{3N/2}=const.$ In addition the
equation $p_2(v)=0$ is equivalent with $\ad(v)=-1$, which means that the zeros
of $p_2(v)$ are the $N/2$ Bethe roots and the $N$ holes. This shows that
\[
p_2(v)=c\prod_{j=1}^{N/2}\sinh(v-v^{(2)}_j)\prod_{j=1}^N\sinh(v-v'^{(2)}_j)= c\  q_2(v)q^{(h)}_2(v)\, ,
\]
which together with (\ref{int8}) proves (\ref{i1}). A similar reasoning can be
applied to show
\be\label{i1b}
\lam_1(v)+\lam_2(v)=c\  \frac{\phi_+(v)q_1^{(h)}(v)}{q_2(v)}\, ,\ \ \ q_1^{(h)}(v)=\prod_{j=1}^N\sinh(v-v'^{(1)}_j)\, ,
\ee
where $\{v'^{(1)}_j\}_{j=1}^N$ are the holes corresponding to the equation $\au(v)=-1$.

Consider again the regime $\g\in(0,\pi/2)$ for which the distribution of roots
and holes is presented in Fig. \ref{LEfig}. The second useful result which we
will prove is
\be\label{integralr}
\int_{\mathcal{C}+\mathcal{C}'}d(v-w)\frac{\mathfrak{a}_j'(w)}{1+\mathfrak{a}_j(w)}dw=0\, ,\ \ \ \
d(v-w)=\frac{d}{dv}\ln\sinh(v-w)\, ,
\ee
where $\mathcal{C}'$, see Fig. \ref{LEfig}, is a rectangular contour with the
lower (upper) edges parallel to the real axis through $i(\g-\varepsilon)/2$
and $-i(\g-\varepsilon)/2+i\pi$ with $\varepsilon\rightarrow 0$.
The lower
edge of the contour $\mathcal{C}'$ at $i(\g-\varepsilon)/2$ coincides with the
upper edge of $\mathcal{C}$ but has opposite orientation which means that the
contribution of the $i \pi$-periodic integrand to the integral is zero.
Then, relation (\ref{integralr}) is proved by noticing that the
contributions of the sides parallel to the imaginary axis are also zero as a
result of $\lim_{\Re w\rightarrow\pm\infty}d(v-w)=\mp 1\, $ and
\[
\frac{\mathfrak{a}_j'(w)}{1+\mathfrak{a}_j(w)}=\frac{\mathfrak{a}_j'(\m)}{\mathfrak{a}_j\, (w)}\frac{1}{1+\mathfrak{a}_j^{-1}(w)}\, ,
\]
\[
\lim_{\Re w\rightarrow\pm\infty}\frac{1}{1+\mathfrak{a}_1^{-1}(w)}\rightarrow\frac{1}{1+e^{\beta(h_3-h_1)}}\, ,\ \ \
\lim_{\Re w\rightarrow\pm\infty}\frac{1}{1+\mathfrak{a}_2^{-1}(w)}\rightarrow\frac{1}{1+e^{\beta(h_2-h_1)}}\, ,\ \ \
\lim_{\Re w\rightarrow\pm\infty}\frac{\mathfrak{a}'_j(w)}{\mathfrak{a}_j\, (w)}=0\, .
\]

Using (\ref{i1}) and (\ref{i1b}) and the definition of the auxiliary functions
(\ref{defaux}) we can also easily derive the following identities
\begin{align}
&-\ln \phi_-(v)+\ln q_1(v)-\ln q_1(v+i\g)-\ln q_2(v-i\g)+\ln q_1^{(h)}(v)-\ln(1+\mathfrak{a}_1(v))+c_1=0\, ,\label{zz1}\\
&-\ln \phi_+(v)+\ln q_2(v)-\ln q_1(v+i\g)-\ln q_2(v-i\g)+\ln q_2^{(h)}(v)-\ln(1+\mathfrak{a}_2(v))+c_2=0\, ,\label{zz2}
\end{align}
with $c_{1,2}$ constants and arbitrary $v$.

\newpage

\setcounter{equation}{0}%
\setcounter{page}{1}
\numberwithin{equation}{section}
\setcounter{section}{0}
\renewcommand{\thesection}{\arabic{section}}

\begin{center}
{\Large Supplemental Material for EPAPS \\
Thermodynamics, density profiles and correlation functions of the inhomogeneous one-dimensional spinor Bose gas }
\end{center}
\section{Algebraic Bethe ansatz for the generalized $q=3$ Perk-Schultz model}

Here we solve the generalized $q=3$ Perk-Schultz
model using the method developed by G\"ohmann \cite{G1,GS} for the $gl(1|2)$
generalized model. The eigenvalues of the transfer matrix (\ref{transferm})
and quantum transfer matrix (\ref{qtransferm}) can be obtained as particular
cases of this general solution.

The generalized $q=3$ Perk-Schultz model is the set of all linear
representations of the Yang-Baxter algebra
\be\label{int16}
\check\R(v,w)[\T(v)\otimes\T(w)]=[\T(v)\otimes\T(w)]\check\R(v,w)\, ,
\ee
with the $\R$-matrix defined in (\ref{rm}) and $\T(v)$ the monodromy matrix
which acts triangularly on a pseudovacuum (highest vector) $|\Omega\rangle$
\be
\T(v)|\Omega\rangle=\left(\begin{array}{ccc} A(v)|\Omega\rangle & B_1(v)|\Omega\rangle & B_2(v)|\Omega\rangle \\
                                             C_1(v)|\Omega\rangle & D_{11}(v)|\Omega\rangle & D_{12}(v)|\Omega\rangle\\
                                             C_2(v)|\Omega\rangle & D_{21}(v)|\Omega\rangle & D_{22}(v)|\Omega\rangle\\
                       \end{array}\right)
                    =\left(\begin{array}{ccr} \pu(v)|\Omega\rangle & B_1(v)|\Omega\rangle & B_2(v)|\Omega\rangle \\
                                              0 & \pd(v)|\Omega\rangle & D_{12}(v)|\Omega\rangle\\
                                              0 &  0                  &\pt(v)|\Omega\rangle
                    \end{array}\right)\, .
\ee
The functions $\varphi_j(v)\, ,j=1,2,3$ are called the parameters of the
model. The monodromy matrices of the transfer matrix (\ref{int3}) and QTM
(\ref{mq}) satisfy these requirements with pseudovacua (\ref{vacuumt}) and
(\ref{pseudoQTM}) respectively. The parameters of the model are given by
\be\label{parameterst}
\pu(v)=(\alpha_1(v,0))^L\, ,\ \ \pd(v)=(\beta(v,0))^L\, ,\ \ \pt(v)=(\beta(v,0))^L\, ,
\ee
in the transfer matrix case and by
\be\label{parametersq}
\pu(v)=e^{\beta h_1}(\alpha_1(v,-u)\beta(u,v))^{N/2}\, ,\ \pd(v)=e^{\beta h_2}(\beta(v,-u)\beta(u,v))^{N/2}\, ,
\ \pt(v)=e^{\beta h_3} (\beta(v,-u)\alpha_3(u,v))^{N/2}\, ,
\ee
in the QTM case. We will obtain the eigenvalues of
\be
\tr (v)=\mbox{tr}_0\T(v)=A(v)+D_{11}(v)+D_{22}(v)\, ,
\ee
using only the Yang-Baxter algebra and the fact that the monodromy matrix acts
triangularly on the pseudovacuum. It is also important to notice that we do
not require that $D_{12}(v)|\Omega\rangle=0$.

Before we start we need to obtain some auxiliary results which will play an
important role in the following considerations. For $Q$ integer and
$\{v_j\}_{j=1}^Q$ a set of inhomogeneities we consider the following monodromy
matrix
\be
\bar\T(v,\{v_j\})=\Lo_Q(v,v_Q)\Lo_{Q-1}(v,v_{Q-1})\cdots\Lo_1(v,v_1)\, ,\ \ \ \ \ \
\ee
with the $\Lo$-operators defined as
\be\label{int17}
\Lo_j(v,v_j)=\sum_{a,b,a_1,b_1=1}^3\R_{b\, b_1}^{aa_1}(v, v_j)e_{ab}^{(0)}e_{a_1b_1}^{(j)}\, ,
\ \ \ \ \  \Lo_j(\la,v_j)\in \mbox{End}\left((\mathbb{C}^3)^{\otimes(Q+1)}\right)\, ,
\ee
where $e_{ab}^{(j)}$ is the canonical basis of operators acting on
$(\mathbb{C}^3)^{\otimes(Q+1)}$. Using the formula
$e_{ab}^{(j)}e_{cd}^{(j)}=\delta_{bc}e_{ad}^{(j)}$ and the definition
(\ref{int17}) we can obtain explicitly the elements of the monodromy matrix
\[
\bar \T(v,\{v_j\})=\bar \T_{b\,b_Q\cdots\, b_1}^{aa_Q\cdots a_1}(v,\{v_j\})e_{ab}^{(0)}e_{a_Qb_Q}^{(Q)}\cdots e_{a_1b_1}^{(1)}\, ,
\]
as
\be\label{elements}
\bar \T_{b\,b_Q\cdots\, b_1}^{aa_Q\cdots a_1}(v,\{v_j\})=\sum_{c_Q,\cdots,c_2=1}^3\R_{c_Qb_Q}^{a\ \,a_Q}(v,v_Q)\R_{c_{Q-1}b_{Q-1}}^{c_Q\ \ \, a_{Q-1}}(v,v_{Q-1})\cdots\R_{c_2b_2}^{c_3a_2}(v,v_2)\R_{b\ \, b_1}^{c_2 a_1}(v,v_1)\, .
\ee
Then, the elements of the associated transfer matrix $\bar \tr(v,\{v_j\})=\mbox{tr}_0\bar\T(v,\{v_j\})$ are
\be\label{int18}
\bar \tr_{b_Q\cdots\, b_1}^{a_Q\cdots a_1}(v,\{v_j\})=\sum_{a,c_Q,\cdots,c_2=1}^3\R_{c_Qb_Q}^{a\ \,a_Q}(v,v_Q)\R_{c_{Q-1}b_{Q-1}}^{c_Q\ \ \, a_{Q-1}}(v,v_{Q-1})\cdots\R_{c_2b_2}^{c_3a_2}(v,v_2)\R_{a\ \, b_1}^{c_2 a_1}(v,v_1)\, .
\ee
The previous formulas can be easily modified to the case of different
$\R$-matrices and number of lattice sites $Q$ and, will be used extensively,
especially (\ref{int18}), in the following computations. Now we are ready to
solve the generalized model.

\subsection{Bethe ansatz  for the first level}\label{ap1}

The Yang-Baxter algebra (\ref{int16}) can be rewritten in a more explicit form
if we use the following block form for the monodromy matrix
\[
\T(v)=\left(\begin{array}{cc}A(v)&B(v)\\
                                      C(v)&D(v)
                     \end{array}\right),\ \ \
B(v)=(B_1(v)\  B_2(v))\, , \ \ C(\la)=\left(\begin{array}{c}C_1(v)\\C_2(v)\end{array}\right)\, , \ \
D(v)=\left(\begin{array}{cc}D_{11}(v)&D_{12}(v)\\
                                      D_{21}(v)&D_{22}(v)
                     \end{array}\right)\, .
\]
Introducing the matrix
\be\label{defx}
X=\left(\begin{array}{ccccc}
           \mathbb{I}_4& & & & \\
            & 0 & 0 & 1 & \\
            & 1 & 0 & 0 & \\
            & 0 & 1 & 0 & \\
            &   &   &   & \mathbb{I}_2
          \end{array}\right)\, ,
\ee
which has the properties: permutes circularly the 5th, 6th and 7th row when
multiplied from the left, permutes circularly the 5th, 6th and 7th row when
multiplied from the right and $XX^T=X^TX=\mathbb{I}_9$, we can perform the
following similarity transformation on the Yang-Baxter algebra (\ref{int16})
\be
(X\check\R(v,w)X^T)[X(\T(v)\otimes\T(w))X^T]=[X(\T(v)\otimes\T(w))X^T](X\check\R(v,w)X^T)\, ,
\ee
which can be written as
\begin{align}\label{int19}
&\left(\begin{array}{cccc}
         \alpha_1(v,w)&0&0&0\\
         0&\g_+(v,w)\mathbb{I}_2&\beta(v,w)\mathbb{I}_2&0\\
         0&\beta(v,w)\mathbb{I}_2&\g_-(v,w)\mathbb{I}_2&0\\
         0&0&0&\check{\bar{\R}}^{(1)}(v,w)
       \end{array}\right)
\left(\begin{array}{cccc}
        A(\la)\otimes A(\m) & A(\la)\otimes B(\m) &B(\la)\otimes A(\m) & B(\la)\otimes B(\m)\\
        A(\la)\otimes C(\m) & A(\la)\otimes D(\m) &B(\la)\otimes C(\m) & B(\la)\otimes D(\m)\\
        C(\la)\otimes A(\m) & C(\la)\otimes B(\m) &D(\la)\otimes A(\m) & D(\la)\otimes B(\m)\\
        C(\la)\otimes C(\m) & C(\la)\otimes D(\m) &D(\la)\otimes C(\m) & D(\la)\otimes D(\m)\\
      \end{array}\right)\nonumber\\
&= \left(\begin{array}{cccc}
        A(\m)\otimes A(\la) & A(\m)\otimes B(\la) &B(\m)\otimes A(\la) & B(\m)\otimes B(\la)\\
        A(\m)\otimes C(\la) & A(\m)\otimes D(\la) &B(\m)\otimes C(\la) & B(\m)\otimes D(\la)\\
        C(\m)\otimes A(\la) & C(\m)\otimes B(\la) &D(\m)\otimes A(\la) & D(\m)\otimes B(\la)\\
        C(\m)\otimes C(\la) & C(\m)\otimes D(\la) &D(\m)\otimes C(\la) & D(\m)\otimes D(\la)\\
      \end{array}\right)
\left(\begin{array}{cccc}
         \alpha(v,w)&0&0&0\\
         0&\g_+(v,w)\mathbb{I}_2&\beta(v,w)\mathbb{I}_2&0\\
         0&\beta(v,w)\mathbb{I}_2&\g_-(v,w)\mathbb{I}_2&0\\
         0&0&0&\check{\bar{\R}}^{(1)}(v,w)
       \end{array}\right)
\end{align}
where $\check{\bar{\R}}^{(1)}(v,w)$ is the $\check\R$-matrix of the
XXZ spin-chain (modulo a gauge transformation)
\be\label{rmatrix1}
\check{\bar{\R}}^{(1)}(v,w)=\left(\begin{array}{cccc}
                          \alpha_2(v,w)&0&0&0\\
                          0&\g_+(v,w)&\beta(v,w)&0\\
                          0&\beta(v,w)&\g_-(v,w)&0\\
                          0&0&0&\alpha_3(v,w)
                        \end{array}\right)\, .
\ee
For the application of algebraic Bethe ansatz we need the following
commutation relations which can be extracted from (\ref{int19})
\begin{subequations}
\begin{align}
A(v)A(w)&=A(w)A(v)\, ,\label{cr1}\\
B_{a_1}(v)B_{a_2}(w)&=\check\R^{(1)}\,_{a_1a_2}^{b_1b_2}(v,w)B_{b_1}(w)B_{b_2}(v)\, ,\label{cr2}\\
A(v)B_b(w)&=g_1(w,v)B_b(w)A(v)-h_-(w,v)B_b(v)A(w)\, ,\label{cr3}\\
D_{ab_1}(v)B_{b_2}(w)&=g_1(v,w)\check\R^{(1)}\,_{b_1b_2}^{c_1c_2}(v,w)B_{c_1}(w)D_{ac_2}(v)-h_+(v,w)B_{b_1}(v)D_{ab_2}(w)\label{cr4}\, ,
\end{align}
\end{subequations}
where
\be
\check\R^{(1)}(v,w)=\frac{1}{\alpha_1(v,w)}\check{\bar{\R}}^{(1)}(v,w)\, ,\ \
g_i(v,w)=\frac{\alpha_i(v,w)}{\beta(v,w)}\, ,\ \
h_{\pm}(v,w)=\frac{\g_\pm(v,w)}{\beta(v,w)}\, .
\ee
Our goal is to solve the eigenvalue problem
\be\label{ee}
\tr(v)|\Phi\rangle=\Lambda(v)|\Phi\rangle\, ,\ \  |\Phi\rangle\in\mathcal{H}\, ,
\ee
with $\tr(v)=A(v)+D_{11}(v)+D_{22}(v)$ and $\mathcal{H}$ the Hilbert space of the
model. We are going to look for eigenvectors of the form
\[
|\Phi\rangle=\sum_{a_1,\cdots,a_n=1}^2F^{a_n\cdots a_1}B_{a_1}(\lu_1)\cdots B_{a_n}(\lu_n)=
   B(\lu_1)\otimes\cdots\otimes B(\lu_n)F\, ,
\]
where $B(\lu_1)\otimes\cdots\otimes B(\lu_n)=\bigotimes_{j=1}^nB(\lu_j)$ can
be viewed as a $2^n$ component row vector with entries acting on
$\mathcal{H}$, $F$ as a column vector with $2^n$ components belonging to the
Hilbert space $\mathcal{H}$, and $\{\lu_j\}_{j=1}^n$ are parameters which will
satisfy Bethe equations.  A consequence of commutation relation (\ref{cr2}) is
the following:
\begin{lem}\label{lemma}
Let $\tr^{(0)}(v,\{\lu_j\})$ be the transfer matrix of a system of $n$ lattice
sites, inhomogeneities $\{\lu_j\}$ and $\R$-matrix
$\R^{(1)}(v,w)=\Pe\check\R^{(1)}(v,w)$. Then
\be\label{int20}
B(\lu_1)\otimes\cdots\otimes B(\lu_n)=B(\lu_2)\otimes B(\lu_3)\otimes\cdots\otimes B(\lu_n)\otimes B(\lu_1)\tr^{(n)}(\lu_1,\{\lu_j\})\, .
\ee
\end{lem}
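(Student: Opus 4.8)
The plan is to prove~(\ref{int20}) by transporting the operator $B(\lu_1)$ from the first tensor slot to the last through repeated use of the exchange relation~(\ref{cr2}), and to recognize the product of $\check\R^{(1)}$-factors thereby generated as the level-$1$ transfer matrix $\tr^{(n)}$. Reading~(\ref{cr2}) with $B(\la)\otimes B(\m)$ regarded as a row vector on the tensor product of the two level-$1$ auxiliary spaces, it states that
\[
B(\la)\otimes B(\m)=\left[B(\m)\otimes B(\la)\right]\check\R^{(1)}(\la,\m)\, ,
\]
so that a single application both interchanges the two spectral parameters and inserts one factor $\check\R^{(1)}(\la,\m)$ acting on the relevant index pair.

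First I would apply this exchange to the two leftmost factors of $\bigotimes_{j=1}^{n}B(\lu_j)$, moving $B(\lu_1)$ into the second slot at the cost of a factor $\check\R^{(1)}(\lu_1,\lu_2)$; iterating the swap against the third, fourth, \dots, $n$-th factor carries $B(\lu_1)$ to the last slot and produces the ordered product $\check\R^{(1)}(\lu_1,\lu_n)\cdots\check\R^{(1)}(\lu_1,\lu_2)$, each factor acting on the moving index of $B(\lu_1)$ together with the index of the factor just crossed. Comparing this accumulated product with the explicit entries~(\ref{int18}) of a transfer matrix, one identifies the moving index of $B(\lu_1)$ with the auxiliary space and the indices of $B(\lu_2),\dots,B(\lu_n)$ with the quantum spaces carrying inhomogeneities $\lu_2,\dots,\lu_n$.

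The step needing care is the closing of the auxiliary trace. Since the inhomogeneity set $\{\lu_j\}$ contains $\lu_1$ itself, at this coincident site one has $\R^{(1)}(\lu_1,\lu_1)=\Pe$ (because $\R(\la,\la)=\Pe$), and this permutation is exactly what identifies the auxiliary index with a quantum index so as to realize the trace $\mbox{tr}_0$ implicit in~(\ref{int18}). The mechanism is already transparent at $n=2$: the single swap gives $\check\R^{(1)}(\lu_1,\lu_2)$, while~(\ref{int18}) with spectral parameter $\lu_1$ and the permutation $\R^{(1)}(\lu_1,\lu_1)=\Pe$ at the coincident site collapses to the same object up to the relabeling of rows and columns inherited from $\check\R^{(1)}=\Pe\R^{(1)}$. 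This anchors an induction on $n$, whose step repeats the same manipulation with $n-1$ nontrivial factors together with the permutation at the coincident site.

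The main obstacle will be combinatorial rather than conceptual: one must track, swap by swap, which auxiliary index is contracted and verify that the order in which the $\check\R^{(1)}$-factors are produced matches the order dictated by~(\ref{int18}), so that the emerging object is the \emph{trace} of the level-$1$ monodromy matrix and not the monodromy matrix itself. Because $\check\R^{(1)}=\Pe\R^{(1)}$ is built from precisely the $\R$-matrix defining $\tr^{(n)}$, no spurious scalar prefactors survive, and once this index accounting is settled the identity~(\ref{int20}) follows directly.
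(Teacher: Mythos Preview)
Your proposal is correct and follows essentially the same route as the paper's proof: repeated application of the exchange relation~(\ref{cr2}) to carry $B(\lu_1)$ to the last slot, followed by identification of the accumulated product of $\check\R^{(1)}$-factors with the transfer-matrix entries~(\ref{int18}), using $\check\R^{(1)}(\lu_1,\lu_1)=\mathbb{I}$ (equivalently $\R^{(1)}(\lu_1,\lu_1)=\Pe$) to supply the missing coincident-site factor that closes the auxiliary trace. The only cosmetic difference is that the paper writes out the index chain explicitly in one shot rather than framing it as an induction on $n$.
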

\begin{proof}
Starting with $B_{b_1}(\lu_1)B_{b_2}(\lu_2)\cdots B_{b_n}(\lu_n)$ and applying successively (\ref{cr2}) we find
\begin{align*}
&B_{b_1}(\lu_1)B_{b_2}(\lu_2)\cdots B_{b_n}(\lu_n)\\
&\ \ \ \ \ \ \ =\check\R^{(1)}\,_{c_n\, b_n}^{a_na_1}(\lu_1,\lu_n)
\check\R^{(1)}\,_{c_{n-1} b_{n-1}}^{a_{n-1} c_{n}}(\lu_1,\lu_{n-1})\cdots\check\R^{(1)}\,_{b_1\, b_2}^{a_2 c_3}(\lu_1,\lu_2)
B_{a_2}(\lu_2)B_{a_3}(\lu_3)\cdots B_{a_n}(\lu_n)B_{a_1}(\lu_1)\, ,\\
&\ \ \ \ \ \ \ =\R^{(1)}\,_{c_n\, b_n}^{a_1a_n}(\lu_1,\lu_n)
\R^{(1)}\,_{c_{n-1} b_{n-1}}^{c_{n}\ \ a_{n-1} }(\lu_1,\lu_{n-1})\cdots\R^{(1)}\,_{b_1\, b_2}^{c_3 a_2}(\lu_1,\lu_2)
B_{a_2}(\lu_2)B_{a_3}(\lu_3)\cdots B_{a_n}(\lu_n)B_{a_1}(\lu_1)\, ,
\end{align*}
Comparing with (\ref{int18}) modified for the $\R$-matrix $\R^{(1)}(v,w)$ and
using $\check\R^{(1)}\,_{a\ b_1}^{c_2
  a_1}(\lu_1,\lu_1)=\delta_{c_2b_1}\delta_{aa_1}$ we see that
\[
B_{b_1}(\lu_1)B_{b_2}(\lu_2)\cdots B_{b_n}(\lu_n)=\tr^{(n)}\,_{b_n\cdots b_1}^{a_n\cdots a_1}(\lu_1,\{\lu_j\})
B_{a_2}(\lu_2)B_{a_3}(\lu_3)\cdots B_{a_n}(\lu_n)B_{a_1}(\lu_1)
\]
proving (\ref{int20}).
\end{proof}
Lemma \ref{lemma} also shows that the cyclic permutation $B(\lu_i)\rightarrow
B(\lu_{i+1})$ followed by the multiplication with the matrix
$U=\tr^{(0)}(\lu_1,\{\lu_i\})\, ,$ leaves the product $B(\lu_1)\otimes
\cdots\otimes B(\lu_n)$ invariant.

Following the general strategy of the algebraic Bethe ansatz we are going to
act with $A(v)+D_{11}(v)+D_{22}(v)$ on the eigenstate $|\Phi\rangle$ and then
use the commutation relations (\ref{cr3}) and (\ref{cr4}) in order to move
this operators to the right.  Due to the fact that in the r.h.s of the
commutation relations we have two terms, moving these operators to the right
will produce $2^n$ terms which we will divide in two categories: wanted and
unwanted. The wanted terms are those which contain the product
$B(\lu_1)\otimes\cdots\otimes B(\lu_n)$ and the unwanted ones have one of the
$B(\lu_i)$ replaced with $B(\la).$ In keeping track of the unwanted terms the
symmetry described in Lemma \ref{lemma} will play an important
role. Satisfying the eigenvalue equation (\ref{ee}) requires that the unwanted
terms should vanish, the condition which will produce the Bethe equations.

{\it Action of $A(v)$ on $[\bigotimes_{j=1}^nB(\lu_j)].$} The wanted term is
produced by using only the first term from the r.h.s of (\ref{cr3}) in moving
$A(v)$ to the right obtaining
\be\label{int21}
[\bigotimes_{j=1}^nB(\lu_j)]A(v)\prod_{j=1}^ng_1(\lu_j,v)\, .
\ee
The first unwanted term is obtained by using the second term of the
commutation relation (\ref{cr3}) in order to move past $B(\lu_1)$ and then
using the first term $n-1$ times with the result
\[
-[B(v)\otimes B(\lu_2)\otimes\cdots\otimes B(\lu_n)]A(\lu_1)h_-(\lu_1,v)\prod_{j=2}^n g_1(\lu_j,\lu_1)\, .
\]
Making use of the $U$ transformation described in Lemma \ref{lemma} we can
derive a general formula for the $k$-th unwanted term
\be\label{int22}
-[B(v)\otimes B(\lu_{k+1})\otimes\cdots\otimes
B(\lu_n)\otimes B(\lu_1)\otimes\cdots\otimes B(\lu_{k-1})]U^{k-1}A(\lu_k)h_-(\lu_k,v)\prod_{\substack{j=1\\j\ne k}}^n g_1(\lu_j,\lu_k)\, .
\ee
Adding (\ref{int21}) and (\ref{int22}) and introducing the notation
\be
S^{k-1}[\bigotimes_{j=1}^nB(\lu_j)]=B(v)\otimes B(\lu_{k+1})\otimes\cdots\otimes
B(\lu_n)\otimes B(\lu_1)\otimes\cdots\otimes B(\lu_{k-1})\, ,\ \  k=1,\cdots,n
\ee
where $S^0[\bigotimes_{j=1}^nB(\lu_j)]=B(v)\otimes B(\lu_2)\otimes\cdots\otimes B(\lu_n)$
then
\be\label{i8}
A(v)[\bigotimes_{j=1}^nB(\lu_j)]=[\bigotimes_{j=1}^nB(\lu_j)]\prod_{j=1}^ng_1(\lu_j,v)A(v)
-\sum_{k=1}^n(S^{k-1}[\bigotimes_{j=1}^nB(\lu_j)])U^{k-1}h_-(\lu_k,v)\prod_{\substack{j=1\\j\ne k}}^n g_1(\lu_j,\lu_k)A(\lu_k)\, .
\ee
%

{\it Action of $D_{11}(v)+D_{22}(v)$ on $[\bigotimes_{j=1}^nB(\lu_j)].$} We
will start with $D_{11}(v).$ The wanted term is obtained by using only the
first term in the r.h.s of (\ref{cr4}) when moving to the right with the
result
\begin{align}\label{int23}
&D_{11}(\la)B_{b_1}(\lu_1)B_{b_2}(\lu_2)\cdots B_{b_n}(\lu_n)\nonumber\\
&=\prod_{j=1}^ng_1(v,\lu_j)\check\R^{(1)}\,_{c_n b_n}^{a_n c}(v,\lu_n)\cdots
\check\R^{(1)}\,_{c_2\, b_2}^{a_2 c_3}(v,\lu_2)\check\R^{(1)}\,_{d_1\, b_1}^{a_1 c_2}(v,\lu_1)
 B_{a_1}(\lu_1)\cdots B_{a_n}(\lu_n)\delta_{1d_1}D_{1c}(v)\, ,\nonumber\\
&=\prod_{j=1}^ng_1(v,\lu_j)\R^{(1)}\,_{c_n b_n}^{c\  a_n }(v,\lu_n)\cdots
\R^{(1)}\,_{c_2\, b_2}^{c_3 a_2}(v,\lu_2)\R^{(1)}\,_{d_1 b_1}^{c_2 a_1}(v,\lu_1)
B_{a_1}(\lu_1)\cdots B_{a_n}(\lu_n)\delta_{1d_1}D_{1c}(v)\, ,\nonumber\\
&=\prod_{j=1}^ng_1(v,\lu_j)B_{a_1}(\lu_1)\cdots B_{a_n}(\lu_n)D_{1c}(v)\T^{(0)}\,_{1b_n\cdots\, b_1}^{ca_n\cdots a_1}(v,\{\lu_j\})
\end{align}
The last line of (\ref{int23}) was derived using (\ref{elements}) with
$\T^{(0)}(v,\{\lu_j\})$ the monodromy matrix of a system with $n$ lattice
sites $\R$-matrix $\R^{(1)}(v,w)$ and inhomogeneites $\{\lu_j\}.$ Introducing
\be\label{deft1}
D^{(0)}(\la)=D(v)\otimes \mathbb{I}_2^{\otimes n}\, ,\ \ \ \text{and} \ \ \ \T^{(1)}(v,\{\lu_i\})=D^{(0)}(v)\T^{(0)}(v,\{\lu_i\})\, ,
\ee
then $D_{1c}(v)\T^{(0)}\,_{1b_n\cdots\, b_1}^{ca_n\cdots
  a_1}(v,\{\lu_j\})=\T^{(1)}_{11}(v,\{\lu_j\})$ where we have denoted by
$\T^{(1)}_{ab}(v,\{\lu_j\})$ the elements of the monodromy matrix
\ $\T^{(1)}(v,\{\lu_j\})$ in the auxiliary space. The computations for
$D_{22}(v)$ are similar but in this case the result involves
$\T^{(1)}_{22}(v,\{\lu_j\})$. Therefore the wanted term is
\be\label{i7}
(D_{11}(v)+D_{22}(v))[\bigotimes_{j=1}^nB(\lu_j)]=
[\bigotimes_{j=1}^nB(\lu_j)]\left(\T^{(1)}_{11}(v,\{\lu_j\})+\T^{(1)}_{22}(v,\{\lu_j\})\right)\prod_{j=1}^ng_1(v,\lu_j)\, .
\ee
The first unwanted term is obtained using once the second term on the r.h.s of
(\ref{cr4}) and $n-1$ times the first term with the result
\begin{align}\label{int24}
&D_{11}(\la)B_{b_1}(\lu_1)B_{b_2}(\lu_2)\cdots B_{b_n}(\lu_n)\nonumber\\
&=-h_+(v,\lu_1)\prod_{j=2}^ng_1(\lu_1,\lu_j)\check\R^{(1)}\,_{c_n b_n}^{a_n c}(\lu_1,\lu_n)\cdots
\check\R^{(1)}\,_{c_3\, b_3}^{a_3 c_4}(\lu_1,\lu_3)\check\R^{(1)}\,_{b_1\, b_2}^{a_2 c_3}(\lu_1,\lu_2)
 B_{a_1}(\lu_1)\cdots B_{a_n}(\lu_n)\delta_{1a_1}D_{1c}(v)\, ,\nonumber\\
&=-h_+(v,\lu_1)\prod_{j=2}^ng_1(\lu_1,\lu_j)\R^{(1)}\,_{c_n b_n}^{c a_n }(\lu_1,\lu_n)\cdots
\R^{(1)}\,_{c_3\, b_3}^{c_4 a_3}(\lu_1,\lu_3)\R^{(1)}\,_{b_1\, b_2}^{c_3 a_2}(\lu_1,\lu_2)
 B_{a_1}(\lu_1)\cdots B_{a_n}(\lu_n)\delta_{1a_1}D_{1c}(v)\, ,\nonumber\\
 &=-h_+(v,\lu_1)\prod_{j=2}^ng_1(\lu_1,\lu_j)B_{a_1}(\lu_1)\cdots B_{a_n}(\lu_n)D_{1c}(v)\T^{(0)}\,_{1b_n\cdots\, b_1}^{ca_n\cdots a_1}(\lu_1,\{\lu_j\})
\end{align}
where the last line was obtained using (\ref{elements}) and $\R^{(1)}\,_{a\ \,
  b_1}^{c_2 a_1}(\lu_1,\lu_1)=\delta_{c_2b_1}\delta_{aa_1}$.  Using the
definition of the monodromy matrix (\ref{deft1}) the first unwanted term for
$D_{11}(v)$ is
\[
-S^0[\bigotimes_{j=1}^nB(\lu_j)]\T^{(1)}_{11}(\lu_1,\{\lu_j\})h_+(v,\lu_1)\prod_{j=2}^ng_1(\lu_1,\lu_j)
\]
and a similar expression involving $\T^{(1)}_{22}(\lu_1,\{\lu_j\})$ for
$D_{22}(v)$. The general form of the unwanted term is
\be\label{int25}
-S^{k-1}[\bigotimes_{j=1}^nB(\lu_j)]U^{k-1}\left(\T^{(1)}_{11}(\lu_k,\{\lu_j\})+
\T^{(1)}_{22}(\lu_k,\{\lu_j\})\right)h_+(v,\lu_k)\prod_{\substack{j=1\\j\ne k}}^ng_1(\lu_k,\lu_j)
\ee
Finally, collecting (\ref{i8}), (\ref{i7}) and (\ref{int25}) we find
\begin{align}\label{ii8}
&(A(v)+D_{11}(v)+D_{22}(v))[\bigotimes_{j=1}^nB(\lu_j)]=[\bigotimes_{j=1}^nB(\lu_j)]\left(\prod_{j=1}^ng_1(\lu_j,v)A(v)+
\prod_{j=1}^ng_1(v,\lu_j)\mbox{tr}\T^{(1)}(v,\{\lu_j\})\right)\\
&-\sum_{k=1}^n(S^{k-1}[\bigotimes_{j=1}^nB(\lu_j)])U^{k-1}
\left( h_-(\lu_k,v)\prod_{\substack{j=1\\j\ne k}}^n g_1(\lu_j,\lu_k)A(\lu_k)+h_+(v,\lu_k)\prod_{\substack{j=1\\j\ne k}}^ng_1(\lu_k,\lu_j)\mbox{tr}\T^{(1)}(\lu_k,\{\lu_j\})\right)\, .\nonumber
\end{align}
%

\subsection{Bethe ansatz for the second level}\label{ap2}

If we introduce the "vacuum subspace" \cite{KR1,R,G1},
$\mathcal{H}_0\in\mathcal{H}$ characterized by the conditions
\be
A(v)|\Phi\rangle=\pu(v)|\Phi\rangle\, ,\ \ \ C(v)|\Phi\rangle=0\, ,\ \ |\Phi\rangle\in\mathcal{H}_0\, ,
\ee
then we can prove the following
\begin{lem}
$\mathcal{H}_0$ is invariant under the action of $D(v)$.
\end{lem}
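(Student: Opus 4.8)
The plan is to verify directly that $D_{ab}(v)$ preserves the two conditions cutting out $\mathcal{H}_0$: for every $|\Phi\rangle\in\mathcal{H}_0$, every spectral parameter $w$, and all $a,b,c\in\{1,2\}$ I must show (i) $C_c(w)\,D_{ab}(v)|\Phi\rangle=0$ and (ii) $A(w)\,D_{ab}(v)|\Phi\rangle=\varphi_1(w)\,D_{ab}(v)|\Phi\rangle$. Both rest on commutation relations read off from the Yang--Baxter algebra (\ref{int16}), obtained in exactly the way the relations (\ref{cr1})--(\ref{cr4}) were extracted for the first level; the new ingredients are the relations reordering $A$ past $D$, $C$ past $D$, and $C$ past $B$. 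Each has the schematic shape ``diagonal block times $D$ equals $D$ times diagonal block, with coefficient one on the leading term, plus a correction assembled from one creation and one annihilation operator,'' the coefficients being the functions $g_1,h_\pm$ and contractions with the reduced XXZ matrix $\check{\R}^{(1)}$.

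For (ii) I would first use the $A$--$D$ relation, which writes $A(w)D_{ab}(v)$ as $D_{ab}(v)A(w)$ plus a correction proportional to $C_a(v)B_b(w)-C_a(w)B_b(v)$. Acting on $|\Phi\rangle$ the leading piece gives $\varphi_1(w)D_{ab}(v)|\Phi\rangle$ since $A(w)|\Phi\rangle=\varphi_1(w)|\Phi\rangle$. The correction does not vanish termwise because $B(w)|\Phi\rangle\neq0$, so I would reorder $C_a(v)B_b(w)$ and $C_a(w)B_b(v)$ with the $C$--$B$ relation; using $C|\Phi\rangle=0$ each collapses to a multiple of $A(w)D_{ab}(v)|\Phi\rangle-A(v)D_{ab}(w)|\Phi\rangle$. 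The two resulting contributions are exchanged by $v\leftrightarrow w$ and turn out to coincide, so their difference cancels identically and (ii) follows.

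For (i) I would use the $C$--$D$ relation, which rewrites $C_c(w)D_{ab}(v)$ as $D_{ab}(v)C_c(w)$ plus a correction that is again a combination of products $C_{a}(\cdot)\,D_{cb}(\cdot)$ with the $C$ still to the left. The term $D_{ab}(v)C_c(w)|\Phi\rangle$ dies at once, but the correction is self-referential, so I would treat the vectors $C_c(w)D_{ab}(v)|\Phi\rangle$, together with their partners obtained by interchanging the spectral parameters, as the unknowns of the closed linear system generated by the relation. Comparing the relation with its image under $v\leftrightarrow w$ first forces the symmetry $C_c(w)D_{ab}(v)|\Phi\rangle=C_c(v)D_{ab}(w)|\Phi\rangle$; feeding this symmetry back into the relation then collapses its right-hand side to zero, which is the desired identity.

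The main obstacle is precisely this self-referential character of the corrections: neither the $C$--$D$ nor the $A$--$D$ reordering lands every spectator operator to the right of $|\Phi\rangle$, so the invariance cannot be checked by a single ``commute and annihilate'' pass, and the genuine work is to show that the resulting closed linear systems admit only the trivial solution. In the trigonometric Perk--Schultz setting this bookkeeping is heavier than in a rational model because the corrections carry the functions $g_1,h_\pm$ and are contracted with $\check{\R}^{(1)}(v,w)$; to carry out the cancellations cleanly I would rely on the initial condition $\check{\R}^{(1)}(v,v)=\mathbb{I}$ and on the same relations among $g_1,h_\pm$ and $\check{\R}^{(1)}$ that already underlie Lemma~\ref{lemma}.
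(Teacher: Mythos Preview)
Your plan is workable in principle but takes an unnecessarily long detour, and the hard part---closing the ``self-referential'' linear systems---is asserted rather than proved.

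The paper's proof is essentially one line for each condition, because it exploits a freedom you overlooked. In the block Yang--Baxter algebra (\ref{int19}) there are \emph{two} entries relating $A\otimes D$ and $D\otimes A$ (namely the $(2,3)$ and $(3,2)$ blocks), and two entries relating $C\otimes D$ and $D\otimes C$ (the $(4,2)$ and $(4,3)$ blocks). They differ in the ordering of the correction terms. The $(2,3)$ entry gives
\[
\gamma_+(v,w)\,B(v)\otimes C(w)+\beta(v,w)\,D(v)\otimes A(w)=\beta(v,w)\,A(w)\otimes D(v)+\gamma_-(v,w)\,B(w)\otimes C(v),
\]
and the $(4,3)$ entry gives
\[
\check{\bar\R}^{(1)}(v,w)\,D(v)\otimes C(w)=\beta(v,w)\,C(w)\otimes D(v)+\gamma_-(v,w)\,D(w)\otimes C(v).
\]
In both, every unwanted term is of the form $(\text{something})\otimes C$, i.e.\ has a $C$-operator acting \emph{first} on $|\Phi\rangle$. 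Since $C(\cdot)|\Phi\rangle=0$, these die immediately, and one reads off $A(w)D(v)|\Phi\rangle=\varphi_1(w)D(v)|\Phi\rangle$ and $C(w)D(v)|\Phi\rangle=0$ without any further reordering.

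You effectively chose the other entries, where the corrections are $C\otimes B$ and $C\otimes D$ (with $C$ acting second), which is why you were forced into a recursive system. That system can indeed be closed---for (i), combining the $(4,2)$ relation with its $v\leftrightarrow w$ image and using $\check{\bar\R}^{(1)}(w,v)\check{\bar\R}^{(1)}(v,w)=(1-\beta^2)\mathbb{I}$ together with $\gamma_+\gamma_-=1$ gives $\beta^2\,C(v)\otimes D(w)|\Phi\rangle=0$---but your sketch does not actually carry this out, and the analogous check for (ii) (where you also need the $C$--$B$ relation with the correct $\check\R^{(1)}$ contraction and mismatched coefficients $\gamma_\pm$, not a bare antisymmetric difference) is more delicate than ``the two contributions turn out to coincide.'' The main obstacle you identify is an artifact of picking the wrong half of the algebra.
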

\begin{proof}
The following relations
\begin{align}
\g_+(v,w)B(v)\otimes C(w)+\beta(v,w)D(w)\otimes A(w)&=\beta(v,w)A(w)\otimes D(v)+\g_-(v,w)B(w)\otimes C(v)\, ,\\
\check{\bar\R}^{(1)}(v,w)D(v)\otimes C(w)&=\beta(v,w)C(w)\otimes D(v)+\g_-(v,w)D(w)\otimes C(v)\, ,
\end{align}
can be obtained from the block form of the Yang-Baxter algebra
(\ref{int19}). Using these relations we have
\[
A(w)\otimes D(v)|\Phi\rangle=\pu(w)D(v)|\Phi\rangle\, ,\ \ C(w)\otimes D(v)|\Phi\rangle=0\, ,
\]
for all $|\Phi\rangle\in\mathcal{H}_0$ finishing the proof.
\end{proof}
A corollary of this lemma is that the linear space spanned by all linear
combination of vectors of the form $D_{12}(\ld_1)\cdots
D_{12}(\ld_m)|\Omega\rangle$ is a linear subspace of $\mathcal{H}_0$.

Using  the definition (\ref{deft1}) and the notation
\[
 \T^{(0)}(\la,\{\lu_j\})=\left(\begin{array}{cc}
                        A^{(0)}(\la,\{\lu_i\})&B^{(0)}(\la,\{\lu_i\})\\
                        C^{(0)}(\la,\{\lu_i\})&D^{(0)}(\la,\{\lu_i\})
                   \end{array}\right)\, ,
\]
the elements of $\T^{(1)}(\la,\{\lu_j\})$ in the auxiliary space are
\begin{align}\label{i10}
\T^{(1)}(\la,\{\lu_j\})=&\left(\begin{array}{cc}
                        A^{(1)}(\la,\{\lu_j\})&B^{(1)}(\la,\{\lu_j\})\\
                        C^{(1)}(\la,\{\lu_j\})&D^{(1)}(\la,\{\lu_j\})
                   \end{array}\right)\, ,\\
             =&\left(\begin{array}{cc}
                     A^{(0)}(\la,\{\lu_j\})D_{11}(\la)+C^{(0)}(\la,\{\lu_j\})D_{12}(\la)& B^{(0)}(\la,\{\lu_j\})D_{11}(\la)+D^{(0)}(\la,\{\lu_j\})D_{12}(\la)\\
                     A^{(0)}(\la,\{\lu_j\})D_{12}(\la)+C^{(0)}(\la,\{\lu_j\})D_{22}(\la)& B^{(0)}(\la,\{\lu_j\})D_{12}(\la)+D^{(0)}(\la,\{\lu_j\})D_{22}(\la)
                     \end{array}\right)\, ,\nonumber
\end{align}
where we have used $[D_{ab}(v),\T^{(0)}_{cd}(v,\{\lu_j\})]=0.$ $D(v)$ (see the
element (4,4) of (\ref{int19})) and $\T^{(0)}(v,\{\lu_j\})$ are both
representations of the Yang-Baxter algebra with the same $\R$-matrix
$\R^{(1)}$, which means that
\be\label{yb1}
\check\R^{(1)}(v,w)[\T^{(1)}(v)\otimes\T^{(1)}(w)]=[\T^{(1)}(w)\otimes\T^{(1)}(v)]\check\R^{(1)}(v,w)\, .
\ee
It follows that $\mbox{tr}\T^{(1)}(v,\{\lu_j\})$ can be diagonalized using ABA
if we can find a pseudovacuum on which $\T^{(1)}(v,\{\lu_j\})$ acts
triangularly. $\T^{(1)}(v,\{\lu_j\})$ acts on $(\mathbb{C}^2)^{\otimes
  n}\otimes\mathcal{H}$ and, as we will show below,
\be
|\bar\Omega\rangle=|\Omega^{(0)}\rangle\otimes|\Omega\rangle=
\left(\begin{array}{c}1\\0\end{array}\right)^{\otimes n}\otimes|\Omega\rangle\, ,
\ee
is an appropriate pseudovacuum. The monodromy matrix $\T^{(0)}(v,\{\lu_i\})$
is the ordered product of the following $\Lo$-operators
\be
\bar\Lo_j(v,\lu_j)=\sum_{a,b,a_1,b_1=1}^2\R^{(1)}\,_{b\, b_1}^{aa_1}(v, \lu_j)e_{ab}^{(0)}e_{a_1b_1}^{(j)}\, ,
\ \ \ \ \  \bar\Lo_j(v,\lu_j)\in \mbox{End}\left((\mathbb{C}^2)^{\otimes(n+1)}\right)\, ,
\ee
which can be represented in the auxiliary space as
\be\label{int26}
\bar\Lo_j(v,\lu_j)=\left(\begin{array}{lr}
                                     \frac{\alpha_2(v,v_j^{(1)})}{\alpha_1(v,v_j^{(1)})}e_{11}^{(j)}+g^{-1}_1(v,\lu_j)e_{22}^{(j)} & \frac{\g_+(v,\lu_j)}{\alpha_1(v,\lu_j)}e_{21}^{(j)}\\
                                      \frac{\g_-(v,\lu_j)}{\alpha_1(v,\lu_j)}e_{12}^{(j)} &  g^{-1}_1(v,\lu_j)e_{11}^{(j)}+\frac{\alpha_3(v,v_j^{(1)})}{\alpha_1(v,v_j^{(1)})}e_{22}^{(j)}
                                   \end{array}\right)\, ,
\ee
with $e^{(j)}_{ab}$ a canonical basis of operators acting on  $(\mathbb{C}^2)^{\otimes n}$. Using
the representation (\ref{int26}) we have
\be
\T^{(0)}(v,\{\lu_j\})|\Omega^{(0)}\rangle =\left(\begin{array}{lr}
                                     \prod_{j=1}^n\frac{\alpha_2(v,v_j^{(1)})}{\alpha_1(v,v_j^{(1)})}|\Omega^{(0)}\rangle  & B^{(0)}(v,\{\lu_j\})|\Omega^{(0)}\rangle \\
                                      0 &  \prod_{j=1}^ng^{-1}_1(v,\lu_j)|\Omega^{(0)}\rangle
                                   \end{array}\right)\, ,
\ee
and as a consequence of (\ref{i10})
\be
\T^{(1)}(v,\{\lu_j\})|\bar\Omega\rangle =\left(\begin{array}{lr}
                                     \pd(v)\prod_{j=1}^n\frac{\alpha_2(v,v_j^{(1)})}{\alpha_1(v,v_j^{(1)})}|\bar\Omega\rangle  & B^{(1)}(v,\{\lu_j\})|\bar\Omega\rangle \\
                                      0 &  \pt(v)\prod_{j=1}^ng^{-1}_1(v,\lu_j)|\bar\Omega\rangle
                                   \end{array}\right)\, ,
\ee
which proves that $|\bar\Omega\rangle$ is a proper pseudovacuum.  The
diagonalization of
$\mbox{tr}\T^{(1)}(v,\{\lu_j\})=A^{(1)}(v,\{\lu_j\})+B^{(1)}(v,\{\lu_j\})$ is
very similar to the case of an inhomogeneous XXZ spin chain \cite{KBI}. The
following commutation relations can be extracted from the Yang-Baxter algebra
(\ref{yb1})
\begin{subequations}
\begin{align}
A^{(1)}(\la)A^{(1)}(\m)&=A^{(1)}(\m)A^{(1)}(\la)\, ,\\
B^{(1)}(\la)B^{(1)}(\m)&=B^{(1)}(\m)B^{(1)}(\la)\, ,\\
A^{(1)}(\la)B^{(1)}(\m)&=g_2(\m,\la)B^{(1)}(\m)A^{(1)}(\la)-h_-(\m,\la)B^{(1)}(\la)A^{(1)}(\m)\, ,\label{int27}\\
D^{(1)}(\la)B^{(1)}(\m)&=g_3(\la,\m)B^{(1)}(\m)D^{(1)}(\la)-h_+(\la,\m)B^{(1)}(\la)D^{(1)}(\m)\, .\label{int28}
\end{align}
\end{subequations}
We are looking for eigenvectors of the form
$|\Phi^{(1)}\rangle=B^{(1)}(\ld_1)\cdots B^{(1)}(\ld_m)|\bar\Omega\rangle$
solving the eigenvalue equation
\be\label{int29}
\mbox{tr}\T^{(1)}(v,\{\lu_j\})|\Phi^{(1)}\rangle=\Lambda^{(1)}(v)|\Phi^{(1)}\rangle\, .
\ee
Moving $A^{(1)}(v)$ and $D^{(1)}(v)$ to the right with the help of the
commutation relations (\ref{int27}) and (\ref{int28}) we obtain
\begin{align}\label{fe2}
&(A^{(1)}(v)+D^{(1)}(v))[\prod_{j=1}^m B^{(1)}(\ld_j)]=
[\prod_{j=1}^m B^{(1)}(\ld_j)] \left(\prod_{j=1}^mg_2(\ld_j,\la)A^{(1)}(v)+\prod_{j=1}^m g_3(\la,\ld_j)D^{(1)}(v)\right)\nonumber\\
&-\sum_{k=1}^m[B^{(1)}(\la)\prod_{\substack{j=1\\j\ne k}}^m B^{(1)}(\ld_j)]\left(h_-(\ld_k,v)\prod_{\substack{j=1\\j\ne k}}^m g_2(\ld_j,\ld_k)A^{(1)}(\ld_k)
+h_+(v,\ld_k)\prod_{\substack{j=1\\j\ne k}}^m g_3(\ld_k,\ld_j)D^{(1)}(\ld_k)
\right)\, .
\end{align}
Using
$A^{(1)}(v)|\bar\Omega\rangle=\pd(v)\prod_{j=1}^n\frac{\alpha_2(v,v_j^{(1)})}{\alpha_1(v,v_j^{(1)})}|\bar\Omega\rangle$
and
$D^{(1)}(v)|\bar\Omega\rangle=\pt(v)\prod_{j=1}^ng^{-1}_1(v,\lu_j)|\bar\Omega\rangle$
then $|\Phi^{(1)}\rangle=B^{(1)}(\ld_1)\cdots$ $
B^{(1)}(\ld_m)|\bar\Omega\rangle$ is an eigenvector of
$\mbox{tr}\T^{(1)}(v,\{\lu_j\})$ with eigenvalue
\be
\Lambda^{(1)}(v)=\pd(v)\prod_{j=1}^n\frac{\alpha_2(v,v_j^{(1)})}{\alpha_1(v,v_j^{(1)})}\prod_{i=1}^mg_2(\ld_i,v)+
\pt(v)\prod_{j=1}^ng^{-1}_1(v,\lu_j)\prod_{i=1}^mg_3(v,\ld_i)\, ,
\ee
if the first set of Bethe equations is satisfied.
\be\label{BAE2}
\frac{\pd(\ld_k)}{\pt(\ld_k)}=\prod_{i=1}^ng^{-1}_2(\ld_k,\lu_i)\prod_{\substack{j=1\\j\ne k}}^m\frac{g_3(\ld_k,\ld_j)}{g_2(\ld_j,\ld_k)}\, ,
\ \ \ \  k=1,\cdots, m\, .
\ee
As a consequence of the invariance of the vacuum subspace $\mathcal{H}_0$
under the action of $D(v),$ $B^{(1)}(\ld_1)\cdots
B^{(1)}(\ld_m)|\bar\Omega\rangle$ is a column vector with $2^n$ rows having
vectors in $\mathcal{H}_0\in\mathcal{H}$. Therefore, $B^{(1)}(\ld_1)\cdots
B^{(1)}(\ld_m)|\bar\Omega\rangle$ is an eigenvector of $A(v)$ seen as an
operator on $(\mathbb{C}^2)^{\otimes n}\otimes\mathcal{H}$
\[
A(v)B^{(1)}(\ld_1)\cdots B^{(1)}(\ld_m)|\bar\Omega\rangle=\pu(v)B^{(1)}(\ld_1)\cdots B^{(1)}(\ld_m)|\bar\Omega\rangle\, .
\]
An immediate consequence of this relation and (\ref{ii8}), (\ref{int29}) is that
\[
[\bigotimes_{j=1}^nB(\lu_j)]B^{(1)}(\ld_1)\cdots B^{(1)}(\ld_m)|\bar\Omega\rangle=
B_{i_1}(\lu_1)\cdots B_{i_n}[B^{(1)}(\ld_1)\cdots B^{(1)}(\ld_m)|\bar\Omega\rangle]^{i_1\cdots i_n}\in\mathcal{H}
\]
is an eigenvector of $\Lambda(v)$ with eigenvalue
\be\label{eigenvalue}
\Lambda(v)=\pu(\la)\prod_{j=1}^ng_1(\lu_j,v)+\pd(\la)\prod_{i=1}^ng_2(v,\lu_i)\prod_{j=1}^mg_2(\ld_j,v)
+\pt(v)\prod_{j=1}^mg_3(v,\ld_j)\, ,
\ee
if the second set of Bethe equations
\be\label{BAE1}
\frac{\pu(\lu_l)}{\pd(\lu_l)}=\prod_{\substack{i=1\\i\ne k}}^n\frac{g_2(\lu_l,\lu_i)}{g_1(\lu_i,\lu_l)}\prod_{j=1}^m g_2(\ld_j,\lu_l)\, ,\ \ \
l=1,\cdots,n\, .
\ee
is satisfied.

\subsection{Choosing a different pseudovacuum}\label{ap3}

The formula (\ref{int5}) for the eigenvalues of the QTM differs from
(\ref{eigenvalue}) with parameters (\ref{parametersq}) by a circular
permutation of the chemical potentials $(h_1,h_2,h_3)\rightarrow(h_3,h_1,h_2)$
and grading $(\varepsilon_1,\varepsilon_2,\varepsilon_3)
\rightarrow(\varepsilon_3,\varepsilon_1,\varepsilon_2)$. In this section we
will show how we can obtain (\ref{int5}) by employing a different
pseudovacuum. Making use of (\ref{lq}) and (\ref{llq}) the action of the
monodromy matrix (\ref{mq}) on the pseudovacuum
\be\label{i11}
|\tilde\Omega\rangle=\underbrace{\left(\begin{array}{c} 0\\
0\\1\end{array}\right)\otimes\left(\begin{array}{c} 0\\1\\0
\end{array}\right)\otimes\cdots
\otimes\left(\begin{array}{c} 0\\0\\1 \end{array}\right)\otimes
\left(\begin{array}{c} 0\\1\\0 \end{array}\right)}_{\mbox{ N factors}}\, ,
\ee
is given by
\begin{align}
\T^{QTM}(v)|\tilde\Omega\rangle
&=\left(\begin{array}{ccc} T_{11}^{QTM}(v)|\tilde\Omega\rangle & T_{12}^{QTM}(v)|\tilde\Omega\rangle & T_{13}^{QTM}(v)|\tilde\Omega\rangle \\
                           T_{21}^{QTM}(v)|\tilde\Omega\rangle & T_{22}^{QTM}(v)|\tilde\Omega\rangle & T_{23}^{QTM}(v)|\tilde\Omega\rangle\\
                           T_{31}^{QTM}(v)|\tilde\Omega\rangle & T_{32}^{QTM}(v)|\tilde\Omega\rangle & T_{33}^{QTM}(v)|\tilde\Omega\rangle\\
  \end{array}\right)\\
&=\left(\begin{array}{ccc} e^{\beta h_1}(\beta(v,-u)\beta(u,v))^{N/2}|\tilde\Omega\rangle & T_{12}^{QTM}(v)|\tilde\Omega\rangle &0\\
                              0 & e^{\beta h_2}(\beta(v,-u)\alpha_2(u,v))^{N/2}|\tilde\Omega\rangle & 0\\
                              T_{31}^{QTM}(v)|\tilde\Omega\rangle & T_{32}^{QTM}(v)|\tilde\Omega\rangle  &e^{\beta h_3}(\alpha_3(v,-u)\beta(u,v))^{N/2}|\tilde\Omega\rangle
  \end{array}\right)\, .\nonumber
\end{align}
Consider now the similarity transformation of the Yang-Baxter algebra (\ref{YBA})
\begin{align}
(X\Pi_{12}\Pi_{23}\check\R(v,w)\Pi_{23}^T\Pi_{12}^TX^T)&[X\Pi_{12}\Pi_{23}(\T^{QTM}(v)\otimes\T^{QTM}(w))\Pi_{23}^T\Pi_{12}^TX^T]
=\nonumber
\\&[X\Pi_{12}\Pi_{23}(\T^{QTM}(v)\otimes\T^{QTM}(w))\Pi_{23}^T\Pi_{12}^TX^T](X\Pi_{12}\Pi_{23}\check\R(v,w)\Pi_{23}^T\Pi_{12}^TX^T)\, ,
\end{align}
with $X$ defined in (\ref{defx}) and $\Pi_{12}=\pi_{12}\otimes\pi_{12},$ $\Pi_{23}=\pi_{23}\otimes\pi_{23},$
\be
\pi_{12}=\left(\begin{array}{ccc} 0&1&0\\1&0&0\\0&0&1
  \end{array}\right)\, ,\ \ \ \ \  \ \
\pi_{23}=\left(\begin{array}{ccc} 1&0&0\\0&0&1\\0&1&0
  \end{array}\right)\, .
\ee
The matrix $\pi_{12}$($\pi_{23}$) permutes the states $1$ and $2$ ($2$ and
$3$) in the auxiliary space and
$\Pi_{12}\Pi_{12}^T=\Pi_{12}^T\Pi_{12}=\mathbb{I}_9\,
,\Pi_{23}\Pi_{23}^T=\Pi_{23}^T\Pi_{23}=\mathbb{I}_9$.  Then, we have
\[
\pi_{12}\pi_{23}\T^{QTM}(v)\pi_{23}^T\pi_{12}^T|\tilde\Omega\rangle=
\left(\begin{array}{ccc}   T_{33}^{QTM}(v)|\tilde\Omega\rangle & T_{31}^{QTM}(v)|\tilde\Omega\rangle & T_{32}^{QTM}(v)|\tilde\Omega\rangle \\
                           T_{13}^{QTM}(v)|\tilde\Omega\rangle & T_{11}^{QTM}(v)|\tilde\Omega\rangle & T_{12}^{QTM}(v)|\tilde\Omega\rangle\\
                           T_{23}^{QTM}(v)|\tilde\Omega\rangle & T_{21}^{QTM}(v)|\tilde\Omega\rangle & T_{22}^{QTM}(v)|\tilde\Omega\rangle\\
     \end{array}\right)=
\left(\begin{array}{ccc}   \tilde\pu(v)|\tilde\Omega\rangle& T_{31}^{QTM}(v)|\tilde\Omega\rangle & T_{32}^{QTM}(v)|\tilde\Omega\rangle \\
                            0&\tilde\pd(v)|\tilde\Omega\rangle&T_{12}^{QTM}(v)|\tilde\Omega\rangle\\
                            0&0&\tilde\pt(v)|\tilde\Omega\rangle\\
     \end{array}\right)\, ,
\]
with
\be\label{int30}
\tilde\pu(v)=e^{\beta h_3}(\alpha_3(v,-u)\beta(u,v))^{N/2}\, ,\ \ \tilde\pd(v)=e^{\beta h_1}(\beta(v,-u)\beta(u,v))^{N/2}\, ,\ \ \
\tilde\pt(v)=e^{\beta h_2}(\beta(v,-u)\alpha_2(u,v))^{N/2}
\ee
Together with
\be\label{int31}
X\Pi_{12}\Pi_{23}\check\R(v,w)\Pi_{23}^T\Pi_{12}^TX^T=
\left(\begin{array}{cccc}
         \alpha_3(v,w)&0&0&0\\
         0&\g_-(v,w)\mathbb{I}_2&\beta(v,w)\mathbb{I}_2&0\\
         0&\beta(v,w)\mathbb{I}_2&\g_+(v,w)\mathbb{I}_2&0\\
         0&0&0&\check{\bar{\R}}^{(1)}(v,w)
       \end{array}\right)
\ee
where
\be
\check{\bar{\R}}^{(1)}(v,w)=\left(\begin{array}{cccc}
                          \alpha_1(v,w)&0&0&0\\
                          0&\g_+(v,w)&\beta(v,w)&0\\
                          0&\beta(v,w)&\g_-(v,w)&0\\
                          0&0&0&\alpha_2(v,w)
                        \end{array}\right)\, .
\ee
and
$\Pi_{12}\Pi_{23}(\T^{QTM}(v)\otimes\T^{QTM}(w))\Pi_{23}^T\Pi_{12}^T=[\pi_{12}\pi_{23}\T^{QTM}(v)\pi_{23}^T\pi_{12}^T]
\otimes[\pi_{12}\pi_{23}\T^{QTM}(w)\pi_{23}^T\pi_{12}^T]$ this means that the
ABA for the generalized model applies also for the choice of the pseudovacuum
$|\tilde\Omega\rangle$ obtaining the eigenvalue (\ref{eigenvalue}) and BAEs
(\ref{BAE2}), (\ref{BAE1}) with parameters (\ref{int30}). One could also argue
that in (\ref{int31}) $\g_+(v,w)$ is exchanged with $\g_-(v,w)$ when compared
with the formula for $X\check\R(v,w)X^T$ appearing in (\ref{int19}).  However,
it is easy to see that this exchange does not affect the considerations of
sections \ref{ap1} and \ref{ap2}.  Comparison of (\ref{parametersq}) and
(\ref{int30}) shows that by using the pseudovacuum (\ref{i11}) instead of
(\ref{pseudoQTM}) the only significant change is represented by the circular
permutation of the chemical potentials $(h_1,h_2,h_3)\rightarrow(h_3,h_1,h_2)$
and grading $(\varepsilon_1,\varepsilon_2,\varepsilon_3)
\rightarrow(\varepsilon_3,\varepsilon_1,\varepsilon_2)$.

\end{document}